\definecolor{ffqqqq}{rgb}{1.,0.,0.}
\definecolor{ududff}{rgb}{0.30196078431372547,0.30196078431372547,1.}
\patchcmd{\section}{\scshape}{\bfseries}{}{}
\renewcommand{\@secnumfont}{\bfseries}
\renewenvironment{quote}{%
   \list{}{%
     \leftmargin0.75cm   
     \rightmargin0.5cm
   }
   \item\relax
}
{\endlist}
\newtheorem{theorem}{Theorem}
\newtheorem{corollary}{Corollary}
\newtheorem{lemma}{Lemma}
\newtheorem{proposition}{Proposition}
\theoremstyle{definition}
\newtheorem{example}{Example}
\renewenvironment{quote}{%
   \list{}{%
     \leftmargin0.75cm   
     \rightmargin0.5cm
   }
   \item\relax
}
{\endlist}
\begin{document}

\title{Queueing games with an endogenous number of machines}

\author[Atay and Trudeau]{Ata Atay \and Christian Trudeau}\thanks{
Ata Atay is a Serra H\'{u}nter Fellow (Professor Lector Serra H\'{u}nter). Ata Atay gratefully acknowledges financial support by the University of Barcelona through grant AS017672. Christian Trudeau gratefully acknowledges financial support by the Social Sciences and
Humanities Research Council of Canada [grant number 435-2019-0141].\\
Atay: Department of Mathematical Economics, Finance and Actuarial Sciences, and Barcelona Economic Analysis Team (BEAT), University of Barcelona, Spain. E-mail: \href{mailto:aatay@ub.edu}{aatay@ub.edu}. \\
Trudeau: Department of Economics, University of Windsor, Windsor, ON, Canada. E-mail: \href{mailto:trudeauc@uwindsor.ca}{trudeauc@uwindsor.ca}.}

\date{\today}
\maketitle
\begin{abstract}
This paper studies queueing problems with an endogenous number of machines with and without an initial queue, the novelty being that coalitions not only choose how to queue, but also on how many machines. For a given problem, agents can (de)activate as many machines as they want, at a cost. After minimizing the total cost (processing costs and machine costs), we use a game theoretical approach to share to proceeds of this cooperation, and study the existence of stable allocations. First, we study queueing problems  with an endogenous number of machines, and examine how to share the total cost. We provide an upper bound and a lower bound on the cost of a machine to guarantee the non-emptiness of the core (the set of stable allocations). Next, we study requeueing problems with an endogenous number of machines, where there is an existing queue. We examine how to share the cost savings compared to the initial situation, when optimally requeueing/changing the number of machines. Although, in general, stable allocation may not exist, we guarantee the existence of stable allocations when all machines are considered public goods, and we start with an initial schedule that might not have the optimal number of machines, but in which agents with large waiting costs are processed first. 
\end{abstract}
\noindent \textbf{Keywords:} queueing problems $\cdot$ convexity $\cdot$ cost sharing $\cdot$ allocation problems \\
\noindent \textbf{JEL Classification:} C44 $\cdot$ C71 $\cdot$ D61 $\cdot$ D63\\
\noindent \textbf{Mathematics Subject Classification (2010):} 60K25 $\cdot$ 90B22 $\cdot$ 91A12 
\newpage
\section{Introduction}
\label{sec:intro}
Consider a set of agents with jobs that have to be executed by a number of machines in such a way that the aim is to minimize the total cost based on some criterion. We observe such problems in many real-life applications such as manufacturing, health care, logistics, etc. In this paper, we consider queueing problems from two different perspectives; (i) queueing problems that consider the problem of optimally queueing the agents before they arrive, (ii) queueing problems that consider the problem of reorganizing (rescheduling) an existing queue optimally. In both problems, a set of agents wait for their jobs to be processed on machines. Each agent has a job that needs the same amount of processing time with a different unit waiting cost. We refer to \cite{chun16} for a comprehensive survey on queueing theory. 

This paper is the first one that allows for an endogenous number of machines. It thus includes the tradeoff that groups have between the cost of maintaining multiple machines and the savings of having their jobs processed faster on said machines. As an example, during the COVID pandemic, health authorities not only had to decide on the order of the queue for vaccines, but also on the speed of the vaccination operations. Similarly, research groups have to determine if they prefer to wait for access to highly-specialized equipment or to buy new equipment for faster access. The concept of an endogenous number of machines is particularly relevant when studying, as we do, the problem using cooperative game theory; the concept of core stability now implies that when a coalition threatens to leave the group, it would do so by paying for for the number of machines that minimizes its own cost.

\cite{m03} studied one machine queueing problem from a cooperative game theoretical perspective and showed that the rule assigning positions in the queue and compensations is the \cite{s53} value of the associated TU-game. \cite{chun06} introduced a pessimistic definition of the worth that can be generated by a subset of agents. It is proved that different definitions lead to very different rules. Towards a generalization to multiple machines, \cite{cjh08} consider queueing problems with two parallel machines. \cite{cetal89} are the first to study one-machine sequencing problems from a cooperative game theoretical point of view. That is, queueing problems with an initial queue where rescheduling is allowed to improve upon the initial situation. The rescheduling of jobs is allowed to reduce weighted completion time and the total savings by rescheduling can be shared by agents who own the jobs. \cite{hetal99} (see also \citealp{s06a}) consider multiple parallel sequencing situations where the number of machines is fixed. They guarantee the non-emptiness of the core for one and two machine situations, and moreover for two subclasses when there are at least three machines.

By contrast, in this paper we consider that the number of machines is endogenous. Each machine has a cost to activate. Hence, in both types of problems, a subset of agents can ``buy'' as many machines as they want to in exchange of its cost, and under some constraints in problems with an existing initial order, might be able to ``sell'' some of the existing machines.  Moreover, each agent incurs some waiting cost until her job is processed and she can leave the system. Then, we take a game theoretical approach to address the question on how to distribute among the players the proceeds of their cooperation, whenever they (re)schedule their jobs to be processed in an optimal way, minimizing total costs. Following the vast literature on different problems on rescheduling an initial queue (see for instance \citealp{cetal01}; \citealp{metal15}; \citealp{bt19}; \citealp{aetal21}), we examine conditions guaranteeing the existence of stable allocations. 

First, we examine queueing problems that consider the problem of optimally queueing the agents before they arrive. Traditionally, queueing problems have a fixed number of machines, and their cost is sunk and thus ignored. The resulting cost game is then superadditive, as congestion implies that the total waiting cost for two agents is larger than the sum of their waiting costs if they are alone. With an endogenous number of machines, the cost function is always subadditive, as two agents can always each buy a machine, generating costs equal to the sum of their individual stand-alone costs. This allows a traditional definition of the core. We provide a lower bound and an upper bound on the cost of a machine for the non-emptiness of the core of a queueing problem with an endogenous number of machines (Theorems \ref{thm:queue_game_upper_bound}, \ref{thm:queue_game_lower_bound}). In the second case, we provide a full description of the core (Theorem \ref{thm:queue_game_lower_bound}).

Next, we consider the problem of rescheduling an existing queue optimally with an endogenous number of machines which we call the requeueing problem. While the sequencing literature has crucially depended on the assumptions regarding the admissible rearrangements of the initial queue by a coalition on a given machine, our results depend mostly on the assumptions made on the adjustment of the queue when new machines are added. 


If we suppose that a coalition will be the sole users of a new machine it adds, we see that the core of a requeueing game with an endogenous number of machines may be empty (Example \ref{ex:hurt}), and that under any assumptions on how a coalition can reorganize its members on the existing machines. Nevertheless, we provide sufficient conditions to guarantee the non-emptiness of the core (Theorems \ref{thm:lower_bound_sequencing}, \ref{thm:upper_bound_sequencing}). We also obtain positive results if we suppose that machines are public goods. That is, when we add machines, the whole queue moves up, and not only the members of the coalition that paid for the extra machines. The distinction is akin to establishing VIP machines and general-use machine. We show that under the assumption of public machines, whenever the initial schedule efficiently orders agents from high to low waiting costs but might not have the optimal number of machines, the public requeueing games always have a non-empty core (Theorem \ref{thm:public_convex}). 

The paper is organized as follows. In Section \ref{sec:queue} we present queueing problems with an endogenous number of machines. In Section \ref{sec:queue_games} we introduce the associated TU-game for queueing problems with an endogenous number of machines. We derive upper and lower bounds on the cost of a machine to guarantee the existence of stable allocations as well as a full characterization of the set of stable allocations. In Section \ref{sec:requeue} we introduce two types of problems and their associated TU-games, the so-called private and public requeueing problems (and games). For private requeuing games, although stable allocations need not exist, we provide an upper-bound and a lower-bound to guarantee their existence. For public requeuing games, we show that stable allocations always exist if the initial schedule serves agents with larger waiting costs first. For both types of games we make different assumptions on what a coalition is allowed to do to reorganize the initial queue and discuss the implications. Finally, we draw conclusions in Section \ref{sec:conc}. We consign proofs of lemmata on optimal number of machines in Appendix \ref{sec:appendix}.

\section{Queueing problems with an endogenous number of machines}
\label{sec:queue}
We examine first the queueing problem. We have a set of agents $N=\{1,2,\ldots, n\}$. When no confusion arises we denote by $|N|=n$ the cardinality of the set of agents. Each agent has one job to be processed on a machine. The agents have access to an unlimited number of machines, but they must pay $b\in\mathbb{R}_+$ for each machine that they use. All jobs and all machines are identical, and each machine can process one job per period.  We assume that each machine starts processing at time 0. 

Every agent $i\in N$ has a waiting cost that is linear with respect to the time it spends in the system. The waiting cost function of an agent $i\in N$ is $w_{i}t$ where $w_{i}>0$ is the waiting (weight) cost per unit time of player $i$ and $t$ is the period at which the job has been processed. We refer to the vector of weights by $w:= (w_{i})_{i\in N}$.  Let $w^{S}_{k}$ be the waiting cost of the $k^{th}$ agent (according to the order $N$) in $S$ and $w^{-S}_{k}\equiv w^{N\setminus S}_{k}$ be the waiting cost of the $k^{th}$ agent outside the coalition $S$. 

 A \emph{queueing problem with an endogenous number of machines} can be described as $(N,w,b)$ where $N$ is the set of agents, $w$ is the vector of unit waiting costs and $b\in\mathbb{R}_+$ is the cost of a machine. We suppose that $w_1\geq w_2\geq...\geq w_n$.

In a queueing problem, we examine the problem before agents arrive to queue: we are looking for the optimal number of machines and the optimal queueing of agents on those machines, the objective being the minimization of the total cost, consisting of the agents' waiting costs and the machine costs.

The solution consists in choosing a number of machines $m\in\{1,...,n\}$ and a schedule $\sigma=(\varphi, s)$, where $\varphi:N\rightarrow \{1,...,m\}$ assigns agents to machines and $s: N\rightarrow \mathbb{N}\cup \{0\}$ assigns to each agent a starting time. 
A schedule $\sigma=(\varphi, s)$ is admissible if for all $i,j\in N$, $\varphi(i)=\varphi(j) \Rightarrow s(i) \neq s(j)$. In words, if two agents are assigned to the same machine, they must have different starting times. 
The set of all possible schedules with $m$ machines is denoted by $\Sigma(m)$. A scheduling plan is $(m,\sigma)$, with $\sigma\in \Sigma(m)$.

Let $N^k=\{i\in N: \varphi(i)=k \}$ be the set of agents assigned to machine $k$. A schedule is a \emph{semi-active schedule} if there is no job which could be started earlier without altering the processing schedule. This has two implications for $\sigma$. First, we must have that if $\varphi(i)=k$ and $s(i)=l>0$, there must be $j\in N^k$ such that $s(j)=l'$ for all $l'\in \{0,...,l-1\}$. Second, we must have $|N^k|-|N^{k'}|<2$ for all $k,k'\in \{1,...,m\}$. In words, the first condition imposes that a schedule on a given machine has no downtime, and processes a job at all periods until all agents assigned to that machine have their job processed. The second condition imposes a difference in the number of agents assigned to pairs of machines to be at most one; otherwise, we could move the last agent on the first machine to the last position on the second machine, reducing the processing time of that agent without affecting the processing time of other agents.

Since no preemption is allowed, the completion time of the job of agent $i$ according to $\sigma=(\varphi, s)$ is $s(i)+1$. Hence, the waiting cost of an agent $i\in N$ can be written as $c_{\sigma}(i)=w_{i}(s(i)+1)$.

We thus need to find $(m,\sigma)$ that optimizes the following objective function: 
\[
\min_{m\in\{1,...,n\}}\left(bm+\min_{\sigma\in\Sigma^m}\sum_{i\in N}c_{\sigma}(i)\right).
\]

It is well-established in the literature that, for the one-machine case (with equal processing times), the total cost is minimal if the players are arranged according to their waiting costs in a decreasing order (see \citealp{s56}; \citealp{cetal89}). With multiple machines, it remains optimal to not process jobs of agents with larger waiting costs after those of agents with smaller waiting costs, i.e. $w_i<w_j \Rightarrow s(i)\leq s(j)$.

Given this result, if we install $m$ machines, it is optimal to schedule the $m$ agents with the highest waiting costs (agents $\{1,\ldots,m\}$) at time 0, and it is irrelevant to which machine each agent is assigned to. The next $m$ agents are then scheduled in the next period, and so on. Thus, the queueing problem reduces to finding the number of machines that solves
\[
\min_{m\in\{1,\ldots,n\}}\left(bm+\sum\limits_{i\in N}\left( \left\lceil \frac{i}{m}\right\rceil \right) w_{i}\right). \footnote{For all $x\in\mathbb{R}$, $\lceil x\rceil := \min\{k\in\mathbb{Z}|x\le k\}$.}
\]

We provide some initial results on the structure of the game. Let $m(S)$ be the optimal number of machines for coalition $S\subseteq N$.\footnote{There might be a tie, in which case pick the lowest number of machines among optimal ones.}

\begin{lemma}\label{alpha}
Fix the set of agents $N$. For any weight vector $w$ there exists a non-increasing function $r^w:\{2,...,n\}\to \mathbb{R}_+$ such that:
\begin{enumerate}[(i)]
\item if $b\geq r^w(2)$, then $m(N)=1$;
\item if $r^w(k)>b\geq r^w(k+1)$ for some $1<k<n$, then  $m(N)=k$; 
\item if   $r^w(n)>b$ then $m(N)=n$.
\end{enumerate}
\end{lemma}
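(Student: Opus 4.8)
The plan is to reduce everything to the convexity of the optimal waiting cost as a function of the number of machines. Write $f(m):=\sum_{i\in N}\lceil i/m\rceil w_i$ for the minimal total waiting cost attainable with $m$ machines, so that the objective is $C(m):=bm+f(m)$ and $m(N)\in\arg\min_{m\in\{1,\dots,n\}}C(m)$. Since $\lceil i/m\rceil$ is non-increasing in $m$, so is $f$; hence the quantity $r^w(k):=f(k-1)-f(k)\ge 0$, the waiting-cost saving obtained by installing a $k$-th machine, is the natural candidate for the threshold function, defined on $\{2,\dots,n\}$. The entire statement will follow once I show that $r^w$ is non-increasing.

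The key step is therefore to prove that $f$ is (discretely) convex, i.e.\ $f(m+1)+f(m-1)\ge 2f(m)$, since this is exactly $r^w(m+1)\le r^w(m)$. I expect that working term by term with the ceilings is awkward, because $m\mapsto \lceil i/m\rceil w_i$ need not be convex individually. Instead I would re-expand $f$ by counting completion times: writing $W(j):=\sum_{i>j}w_i$ for the tail sums and using $\lceil i/m\rceil=\#\{t\ge 1: i>(t-1)m\}$, a double-counting rearrangement gives the clean identity $f(m)=\sum_{s\ge 0}W(sm)$. Because the weights are ordered $w_1\ge\dots\ge w_n$, the successive increments of $W$ are $W(j)-W(j+1)=w_{j+1}$, which are non-increasing, so $W$ is a non-increasing convex sequence (extended by $W(j)=0$ for $j\ge n$). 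Convexity of $W$ yields the midpoint inequality $W(s(m+1))+W(s(m-1))\ge 2W(sm)$ for every $s\ge 0$ and $m\ge 1$ (the center $sm$ is the midpoint of $s(m-1)$ and $s(m+1)$); summing over $s$ gives $f(m+1)+f(m-1)\ge 2f(m)$, the desired convexity. This re-expression is the crux of the argument and the one place where the ordering of the weights is used.

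Once $r^w$ is non-increasing, the three cases are bookkeeping. From $C(m)-C(m-1)=b-r^w(m)$ and the monotonicity of $r^w$, the increments of $C$ are non-decreasing in $m$, so $C$ is convex and its minimizer is the largest $k$ with $r^w(k)>b$ (picking the smallest such $m$ under ties, consistent with the stated tie-breaking). Concretely: if $b\ge r^w(2)$ then $b\ge r^w(k)$ for all $k\ge 2$, so $C$ is non-decreasing and $m(N)=1$; if $r^w(k)>b\ge r^w(k+1)$ then $C$ strictly decreases up to $m=k$ and is non-decreasing afterward, giving $m(N)=k$; and if $r^w(n)>b$ then $C$ strictly decreases throughout, giving $m(N)=n$. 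I anticipate the only real work is the convexity of $f$; the rest is monotone-threshold reasoning.
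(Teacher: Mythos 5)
Your proof is correct, and the crux of it — the monotonicity of $r^w$ — is established by a genuinely different argument from the paper's. The paper proves $r^w(k)\le r^w(k-1)$ by pairing up the summands of the two expressions $\sum_{i}\bigl(\lceil i/(k-1)\rceil-\lceil i/k\rceil\bigr)w_i$ and $\sum_{i}\bigl(\lceil i/(k-2)\rceil-\lceil i/(k-1)\rceil\bigr)w_i$ with an index shift and asserting a chain of term-by-term inequalities between ceiling differences weighted by consecutive $w_i$'s; this is exactly the "awkward" route you anticipated, and the individual inequalities are stated rather than derived. Your route instead rewrites $f(m)=\sum_{i}\lceil i/m\rceil w_i$ via the layer-counting identity $\lceil i/m\rceil=\#\{t\ge 1: i>(t-1)m\}$ as $f(m)=\sum_{s\ge 0}W(sm)$ with $W(j)=\sum_{i>j}w_i$, observes that the ordering $w_1\ge\dots\ge w_n$ makes $W$ a convex non-increasing sequence (extended by zero beyond $n$), and gets $f(m+1)+f(m-1)\ge 2f(m)$ from the midpoint inequality $W(s(m+1))+W(s(m-1))\ge 2W(sm)$ summed over $s$; all the ceiling arithmetic is absorbed into a single clean identity. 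What your approach buys is a fully verifiable one-line reason for discrete convexity of $f$ and a transparent localization of where the ordering of the weights is used; what the paper's approach buys is that it stays in the notation of $r^w(k)$ that is reused later (e.g.\ in Lemma 2, where $r^w_S(k)\le r^w_T(k)$ is compared summand by summand). The concluding case analysis — increments $C(m)-C(m-1)=b-r^w(m)$ are non-decreasing, so the minimizer with the lowest-index tie-break is the largest $k$ with $r^w(k)>b$, or $1$ if there is none — matches the paper's.
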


We can similarly define a non-increasing function $r^w_S:\{2,...,|S|\}\to \mathbb{R}_+$ for all $S\subset N$ such that $|S|>1$ to determine $m(S)$. For singletons, it is always optimal to use a single machine, and thus $m(\{i\})=1$ for all $i\in N$. Observing the structure of these functions $r^w_S$, the following result follows:

\begin{lemma}\label{subset}
For all values of $w$ and $b$, we have:
\begin{enumerate}[(i)]
\item $m(S)\leq m(T)$ for all $S\subset T\subseteq N$;
\item $m(S\cup \{i\})\leq m(S\cup\{j\})$ for all $S\subseteq N\setminus \{i,j\}$, and $i>j$.
\end{enumerate}
\end{lemma}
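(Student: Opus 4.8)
The plan is to reduce both statements to a single monotonicity property of the threshold functions $r^w_S$, defined for each coalition $S$ exactly as in the proof of Lemma \ref{alpha}. Applied verbatim to an arbitrary $S$ with $|S|>1$, that proof shows that $m(S)$ is the largest $k\in\{2,\ldots,|S|\}$ with $r^w_S(k)>b$, and $m(S)=1$ if no such $k$ exists, where
\[
r^w_S(k)=\sum_{l=1}^{|S|}\left(\left\lceil\frac{l}{k-1}\right\rceil-\left\lceil\frac{l}{k}\right\rceil\right)w^S_l.
\]
Writing $f_k(l):=\lceil l/(k-1)\rceil-\lceil l/k\rceil$, I note that $f_k(l)\geq 0$ for every $l$, since $1/(k-1)\geq 1/k$. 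The crucial observation is that $f_k(l)$ depends only on the position index $l$ and on $k$, not on the coalition, so two coalitions can be compared position by position at equal indices. Consequently, if I can establish that $r^w_A(k)\leq r^w_B(k)$ for all admissible $k$, the characterization immediately yields $m(A)\leq m(B)$: taking $k=m(A)$ (the case $m(A)=1$ being trivial), $r^w_A(k)>b$ forces $r^w_B(k)>b$, hence $m(B)\geq k$.

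For part (i), I fix $S\subset T$ and $k\in\{2,\ldots,|S|\}$ and invoke the standard monotonicity of order statistics: since every agent of $S$ also lies in $T$, for each position $l\leq|S|$ the $l$-th largest weight in $T$ is at least the $l$-th largest in $S$, i.e. $w^T_l\geq w^S_l$. Because the coefficients $f_k(l)$ are non-negative and the sum defining $r^w_T(k)$ carries the extra non-negative tail $\sum_{l=|S|+1}^{|T|}f_k(l)w^T_l$, I obtain
\[
r^w_T(k)=\sum_{l=1}^{|T|}f_k(l)\,w^T_l\geq\sum_{l=1}^{|S|}f_k(l)\,w^T_l\geq\sum_{l=1}^{|S|}f_k(l)\,w^S_l=r^w_S(k),
\]
and therefore $m(T)\geq m(S)$.

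For part (ii), I set $S_i=S\cup\{i\}$ and $S_j=S\cup\{j\}$ with $i>j$, so that $w_i\leq w_j$ and $|S_i|=|S_j|=|S|+1$. The two coalitions share the common part $S$ and differ only in the distinguished element. The relevant tool is again order-statistic monotonicity, now in the form that replacing one element of a multiset by a weakly larger one cannot decrease any order statistic. Since $S_j$ arises from $S_i$ by replacing the weight $w_i$ with $w_j\geq w_i$, I get $w^{S_j}_l\geq w^{S_i}_l$ for every $l\leq|S|+1$; multiplying by $f_k(l)\geq 0$ and summing gives $r^w_{S_j}(k)\geq r^w_{S_i}(k)$ for all admissible $k$, whence $m(S_i)\leq m(S_j)$.

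The only genuinely delicate point is the step that uses the hypotheses, namely the two order-statistic comparisons ($w^T_l\geq w^S_l$ under inclusion and $w^{S_j}_l\geq w^{S_i}_l$ under replacement by a larger weight). A naive route that tracks how inserting or swapping an agent shifts every later position and then telescopes the differences $f_k(l+1)-f_k(l)$ — which alternate in sign according to whether $l$ is divisible by $k-1$ or by $k$ — becomes unwieldy. The simplification is to realize that one should compare the two cost structures at \emph{equal} position indices, where the coefficients $f_k(l)$ coincide and the weight vectors dominate term by term. Both comparisons are classical facts about order statistics, verifiable by counting, for each threshold $v$, how many weights exceed $v$ in each coalition.
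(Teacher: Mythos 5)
Your proposal is correct and follows essentially the same route as the paper: both arguments reduce the claim to the pointwise comparison $r^w_S(k)\leq r^w_T(k)$ (respectively $r^w_{S\cup\{i\}}(k)\leq r^w_{S\cup\{j\}}(k)$), obtained from the non-negativity of the coefficients $\left\lceil l/(k-1)\right\rceil-\left\lceil l/k\right\rceil$ together with the order-statistic domination $w^S_l\leq w^T_l$, and then invoke the threshold characterization of $m(\cdot)$ from Lemma \ref{alpha}. Your handling of part (ii) via replacement of $w_i$ by $w_j\geq w_i$ is slightly more explicit than the paper's one-line remark, but it is the same idea.
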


In words, if we add agents to a coalition, it cannot be optimal to use less machines. The strategy to add an additional machine can only become more profitable (or less unprofitable) as the new agents might have higher waiting costs, and the additional agents might lead to more saved waiting costs. The same is true if we replace an agent by one with a larger waiting cost.
 
\section{Queueing games with an endogenous number of machines}
\label{sec:queue_games}
A \emph{cooperative transferable utility (TU) game} is defined by the pair $(N,C)$ where $N$ is the set of the players and the coalitional function $C$ assigns to each coalition $T\subseteq N$ its cost $C(T)\in\mathbb{R}$, with $C(\emptyset)=0$.

Cooperative game theory aims to allocate the value of the grand coalition in such a way that the cooperation is preserved among the agents. Given a cooperative game $(N,C)$, a \emph{cost allocation} is $y\in \mathbb{R}^{N}$, where $y_{i}$ stands for the cost paid by player $i\in N$. The total payment by a coalition $S\subseteq N$ is denoted by $y(S)=\sum\limits_{i\in S}y_{i}$ with $y(\emptyset)=0$. 

In this section, we study the set of stable allocations of the total cost, where no coalition of agents pays more than its stand-alone cost. To do so, for any queueing problem with an endogenous number of machines, we will introduce a TU-game and study the \emph{core} of the associated TU-game (\citealp{g59}).

Formally, let $(N,w,b)$ be a queueing problem with an endogenous number of machines. Then, the corresponding \emph{queueing game with an endogenous number of machines} is the pair $(N,C)$ where $N$ is the set of players, and $C$ is the characteristic function that assigns the minimal cost $C(T)$ to each coalition $T\subseteq N$ to queue its members, with $C(\emptyset)=0$. $C(T)$ includes both the waiting costs and the cost of machines.
The core of a cooperative cost game $(N,C)$ is: $$Core(C)=\{y\in\mathbb{R}^{N}\mid y(N)=C(N),\quad y(S)\leq C(S)\quad \text{for all}\quad S\subset N\}.$$ A game is called \emph{balanced} if its core is non-empty.

Concave cost functions always have a non-empty core (\citealp{s71}). Formally, a game $(N,C)$ is said to be \emph{concave} if for all $i\in N$ and all $S\subseteq T\subseteq N\setminus \{i\}$, it holds $C(T\cup\{i\})-C(T)\le C(S\cup\{i\})-C(S)$.

\subsection{On the non-emptiness of queueing games with an endogenous number of machines}

We look for conditions under the core is empty or non-empty. It turns out that for queueing games with an endogenous number of machines, the core can alternate between empty and non-empty. We first examine the cases when the cost of a machine is low, before examining the case when the cost is high. We conclude the section with an example illustrating Theorems \ref{thm:queue_game_upper_bound} and \ref{thm:queue_game_lower_bound} and how the core varies with the cost of machines.

First, for queueing games with an endogenous number of machines, we provide an upper bound on the cost of a machine to guarantee the non-emptiness of the core. 

For the sake of comprehensiveness, let us introduce some notation: Let $\mu\equiv \left\lceil \frac{n}{2}\right\rceil $. If $n$ is even, then $\{1,...,\mu\}$ and $\{\mu+1,...,n\}$ both contain $\mu$ agents, while if $n$ is odd, then $\{1,...,\mu\}$ contains $\mu$ agents and $\{\mu+1,...,n\}$ contains $\mu$-1 agents.

\begin{theorem}
\label{thm:queue_game_upper_bound}
Let $(N,w, b)$ be a queueing problem with an endogenous number of machines, and $(N,C)$ be the associated TU-game. 
\begin{enumerate}[(i)]
\item If $b\leq w_{\mu}$ then $y=\left( \min \left(b+w_{i},2w_i\right) \right) _{i\in N}\in Core(C)$;

\item If $b\leq w_{\left\lceil \frac{2n+1}{4}\right\rceil}$ then $Core(C)=\left( \min \left(b+w_{i},2w_i\right) \right) _{i\in N}$.
\end{enumerate}
\end{theorem}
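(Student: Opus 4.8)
The plan is to reduce everything to one universal inequality comparing coalitional cost with the candidate allocation, and then to obtain uniqueness in (ii) through marginal contributions.

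\textbf{Step 1 (a universal lower bound on coalitional cost).} First I would show that $C(S)\ge \sum_{i\in S}\min(b+w_i,2w_i)=y(S)$ for \emph{every} $S\subseteq N$, with no restriction on $b$. Fix an optimal scheduling plan for $S$ with $m(S)$ machines; in an optimal plan every machine is used, and since the schedule is semi-active each used machine processes its first job with completion time $1$. Hence exactly $m(S)$ agents of $S$ complete at time $1$ and all others at some time $t_i\ge 2$. Charging the machine cost $b\,m(S)$ as $b$ to each of the $m(S)$ time-$1$ agents, the total cost of $S$ decomposes as $\sum_{\text{time }1}(b+w_i)+\sum_{\text{time}\ge 2}t_i w_i$. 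Since $b+w_i\ge\min(b+w_i,2w_i)$ and $t_i w_i\ge 2w_i\ge\min(b+w_i,2w_i)$, this is at least $y(S)$ summand by summand.

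\textbf{Step 2 (efficiency, giving (i)).} For $N$ itself I would exhibit a matching schedule. Writing $k=|\{i:w_i\ge b\}|$, the hypothesis $b\le w_\mu$ forces $k\ge\mu$, hence $n\le 2k$, so scheduling on $k$ machines places agents $1,\dots,k$ at time $1$ and $k+1,\dots,n$ at time $2$, with cost exactly $bk+\sum_{i\le k}w_i+\sum_{i>k}2w_i=y(N)$. Thus $C(N)\le y(N)$, and combined with Step 1 we get $C(N)=y(N)$. Feasibility of $y$ against all proper coalitions is precisely Step 1, so $y\in Core(C)$, proving (i).

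\textbf{Step 3 (uniqueness, giving (ii)).} Here I would pin each coordinate from below by marginal contributions. The point is that the sharper threshold is exactly what makes the efficiency conclusion of (i) survive deletion of any single agent: a short computation gives $\lceil(2n+1)/4\rceil=\lceil(n-1)/2\rceil+1$, and for $S=N\setminus\{i\}$ the relevant weight $w^S_{\lceil|S|/2\rceil}$ is minimized (equal to $w_{\lceil(n-1)/2\rceil+1}$) precisely when $i$ is one of the top agents. Therefore $b\le w_{\lceil(2n+1)/4\rceil}$ guarantees $b\le w^S_{\lceil|S|/2\rceil}$ for every $S=N\setminus\{i\}$, and applying part (i) to each such subgame yields $C(N\setminus\{i\})=y(N\setminus\{i\})$. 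Then for any $z\in Core(C)$, efficiency together with the constraint on $N\setminus\{i\}$ gives $z_i=z(N)-z(N\setminus\{i\})\ge C(N)-C(N\setminus\{i\})=y(N)-y(N\setminus\{i\})=y_i$. Since $z_i\ge y_i$ for all $i$ while $z(N)=y(N)$, every inequality must be an equality, so $z=y$ and $Core(C)=\{y\}$.

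\textbf{Main obstacle.} I expect the delicate part to be the bookkeeping in Step 3: verifying that $w_{\lceil(2n+1)/4\rceil}$ is exactly the weight that stays above $b$ after deleting the highest-weight agent, handling the parity of $n$ in the ceiling identity, and confirming that part (i) legitimately transfers to the subgame on $N\setminus\{i\}$ (which amounts to re-running Steps 1--2 with $S$ in the role of the grand coalition). Step 1's term-by-term decomposition is the conceptual engine and is robust across all coalitions; the remaining work is careful index arithmetic.
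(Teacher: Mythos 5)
Your proof is correct, but it takes a genuinely different route from the paper's, most notably in part (ii). Your Step 1 (the termwise bound $C(S)\ge\sum_{i\in S}\min(b+w_i,2w_i)$ obtained by charging each machine's cost to its time-$1$ agent) is essentially the paper's argument for the core constraints, just stated uniformly for all $b$. In Step 2 you then get efficiency by exhibiting the schedule on $k=|\{i:w_i\ge b\}|$ machines and sandwiching with Step 1, whereas the paper computes $C(N)$ directly through the marginal-machine thresholds $r^w(k)$ of its Lemma \ref{alpha}; your version bypasses that machinery entirely. The real divergence is Step 3: the paper proves uniqueness by explicitly evaluating $C(N\setminus\{i\})$ for the top agents (comparing $k$, $k-1$ and $k-2$ machines) to force $y_i=b+w_i$, and then invoking two-agent coalitions $\{i,j\}$ to cap the remaining agents at $2w_j$; you instead observe that $\lceil(2n+1)/4\rceil=\lceil(n-1)/2\rceil+1$ is exactly the threshold under which part (i)'s efficiency conclusion survives in every $(n-1)$-agent subgame, so that the marginal-contribution bound $z_i\ge C(N)-C(N\setminus\{i\})=y_i$ applies uniformly to all agents and, with budget balance, forces $z=y$ in one stroke. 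Your argument is shorter and makes transparent why the sharper threshold appears in (ii); the paper's computation yields explicit formulas for $C(N\setminus\{i\})$ and $C(\{i,j\})$ along the way, which it reuses in part (ii) of Theorem \ref{thm:queue_game_lower_bound} and in Example \ref{ex:core}. The index arithmetic you flagged as the main obstacle does check out in both parities of $n$.
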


\begin{proof}
(i) We first show that the allocation $y=\left( \min \left(b+w_{i},2w_i\right) \right) _{i\in N}$ is budget balanced. 

First, notice that if we use $k\geq \mu$ machines, than agents wait at most 2 periods. Adding an additional machine allows to reduce the waiting cost of agent $k+1$ from 2 to 1 period, with all other waiting costs remaining the same. Thus, $r^w(k)=w_k$ for all $k>\mu$. 

Notice also that when moving from $\mu-1$ to $\mu$ machines the cost savings are larger: in addition to agent $\mu$ waiting for 1 period instead of 2, some other agents will wait 2 periods instead of 3. Thus, $r^w(\mu)\geq w_{\mu}$.
Thus, given that $b\leq w_{\mu}\leq r^{w}(\mu)$ and by Lemma \ref{alpha}, $m(N)\geq \mu$.

Let $C(\cdot,k)$ be the cost function that assigns to each coalition the total cost if it uses $k$ machines to process their jobs. For $k\ge \mu$, $$C(N,k)=kb+\sum_{i=1}^kw_i+\sum\limits_{i=k+1}^{n}2w_{i}.$$

Thus, 
\begin{eqnarray*}
C(N) &=&\min_{k\in \left\{ \mu,...,n\right\} }\left\{
kb+\sum_{i=1}^{k}w_{i}+\sum_{i=k+1}^{n}2w_{i}\right\}  \\
&=&b(\mu-1)+\sum_{i=1}^{\mu-1}w_{i}+\min_{k\in \left\{ \mu ,...,n\right\} }\left\{b(k-\mu+1)+\sum_{i=\mu}^{k}w_{i}+\sum_{i=k+1}^{n}2w_{i}\right\}  \\
&=&\sum_{i=1}^{\mu-1}\left( b+w_{i}\right) +\sum_{i=\mu}^{n}\min
\left( b+w_{i},2w_{i}\right)  \\
&=&\sum_{i\in N}\min \left( b+w_{i},2w_{i}\right)  \\
&=&\sum_{i\in N}y_{i}
\end{eqnarray*}
The third equality comes from the fact that for all $k>\mu$, $r^{w}(k)=w_k$, implying that we use at least $k$ machines if and only $b+w_k\leq 2w_k$. While $r^w(\mu)\geq w_{\mu}$, by assumption $b\leq w_{\mu}$. The fourth equality also comes from the fact that by assumption, $b\leq w_{\mu}$.

It remains to prove that core constraints are satisfied, i.e., $y(T)\leq C(T)$ for all $T\subset N$. Fix $T\subset N$ and suppose that $\kappa$ is the optimal number of machines for $T.$

We have that 
\begin{eqnarray*}
\sum_{i\in T}y_{i} &\leq & \kappa b+\sum_{i=1}^{\kappa}w^T_i+\sum_{i=\kappa+1}^{\left\vert T\right\vert
}2w_{i}^{T}  \\
&\leq&C(T),
\end{eqnarray*}%
where the first inequality is obtained by assigning $b+w_i$ to the first $\kappa$ agents in $T$ and $2w_i$ to others, regardless of which of these two values is minimal, and the second inequality comes from the fact that the expression is exactly the cost of coalition $T$ if $\kappa\geq \frac{|T|}{2}$, with the cost no smaller otherwise. Thus the core constraint is satisfied. Since $T$ is arbitrarily chosen,
the proof is complete.

(ii) Notice first that if $n$ is odd, $\left\lceil \frac{2n+1}{4}\right\rceil=\mu$, while if $n$ is even, $\left\lceil \frac{2n+1}{4}\right\rceil=\mu+1$. In particular, for any $n$, we have that $\left\lceil \frac{2n+1}{4}\right\rceil-1\geq \frac{n-1}{2}$.

If $b< w_n$, then $C(S)=|S|b+\sum_{i\in S} w_i$ for all $S\subseteq N$ and the result is immediate. Thus, suppose that $b\geq w_n$.

Suppose that $w_{k+1}\leq b < w_k$ for $k\in \left\{\left\lceil \frac{2n+1}{4}\right\rceil,...,n-1\right\}$. Then, by Lemma \ref{alpha}, $C(N)=kb+\sum_{i=1}^kw_i+\sum_{i=k+1}^n2w_i$.

Consider coalition $N\setminus \{i\}$ for $i\in \left\{1,...,k\right\}$. If they use $k$ machines, the cost is $kb+\sum_{j=1}^kw_i+\sum_{j=k+1}^n2w_i-w_i-w_{k+1}$. If they use $k-1$ machines, the cost is $(k-1)b+\sum_{j=1}^kw_i+\sum_{j=k+1}^n2w_i-w_i$, as $k-1\geq \left\lceil \frac{2n+1}{4}\right\rceil-1\geq \frac{n-1}{2}$. Thus, it prefers to use $k-1$ machines if $b\geq w_{k+1}$, which is satisfied. If they use $k-2$ machines, the cost is at least $(k-2)b+\sum_{j=1}^kw_i+\sum_{j=k+1}^n2w_i-w_i+w_{k}$ (as some agents might have to wait more than 2 periods now), and as $b\leq w_k$ it prefers to use $k-1$ machines. Thus, $C(N\setminus \{i\})=(k-1)b+\sum_{j=1}^kw_i+\sum_{j=k+1}^n2w_i-w_i$.

Notice that $C(N\setminus \{i\})+C( \{i\})=C(N)$, and thus in any core allocation, we must have $y_i=C( \{i\})=b+w_i$ for all $i\in \left\{1,...,k\right\}$.

Next, consider coalition $\{i,j\}$, with $i\in \left\{1,...,k\right\}$ and $j\in \left\{k+1,...,n\right\}$. If it uses a single machine, the cost is $b+w_i+2w_j$. If it uses 2 machines, the cost is $2b+w_i+w_j$. It prefers to use a single machine as $b\geq w_{k+1}\geq w_j$. Thus, $C(\{i,j\})=b+w_i+2w_j$. Since $y_i=b+w_i$, we obtain a core constraint of $y_j\leq 2w_j$ for all $j\in \left\{k+1,...,n\right\}$. Given the value of $C(N)$, our only core candidate is  $y_i=b+w_i$ for all $i\in \left\{1,...,k\right\}$ and $y_j=2w_j$ for all $j\in \left\{k+1,...,n\right\}$. Given that we have shown in part i) that it is a core allocation, our proof is complete.
\end{proof}

Following an upper-bound on the cost of a machine for the non-emptiness of the core, we provide a full characterization of the core making use of a lower-bound for the non-emptiness of the core. 

\begin{theorem}
\label{thm:queue_game_lower_bound}
Let $(N,w,b)$ be a queueing problem with an endogenous number of machines, and $(N,C)$ be the associated TU-game. Then,
\begin{enumerate}[(i)]
    \item if $b\ge \sum_{i=1}^{n}(i-1)w_{i}$, $Core(C)=Core(\hat{C})\neq\emptyset$ with $\hat{C}(T):=C(T)-\sum\limits_{i=1}^{n-|T|-1}iw_{i+1}^{-T}$ for all $\emptyset\neq T\subseteq N$. Moreover, $\hat{C}$ is concave.
    \item if $b\in \left[w_2+\sum\limits_{i=3}^{n}\left(i-\left\lceil\frac{i}{2}\right\rceil\right)w_{i},\sum_{i=1}^{n}(i-1)w_{i} \right)$, then $Core(C)= \emptyset$.
\end{enumerate}
\end{theorem}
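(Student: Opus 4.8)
The plan is to reduce both parts to a single explicit formula for $C$ together with one family of individual core constraints.

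\emph{The common reduction.} In both regimes $b$ exceeds $r^w(2)$: in (i) because $\sum_{i=1}^n(i-1)w_i\ge r^w(2)$ (compare coefficients of each $w_i$, using $i-1\ge\lfloor i/2\rfloor$), and in (ii) because the left endpoint is exactly $r^w(2)=w_2+\sum_{i=3}^n(i-\lceil i/2\rceil)w_i$. Using the inequality $r_S^w(2)\le r_N^w(2)=r^w(2)$ from the proof of Lemma \ref{subset}, Lemma \ref{alpha} then gives $m(S)=1$ for every coalition, so $C(S)=b+\sum_{k=1}^{|S|}k\,w_k^S$. The first step is to rewrite this as $C(S)=b+\sum_{i\in S}w_i+E(S)$, where $E(S):=\sum_{k=1}^{|S|}(k-1)w_k^S=\sum_{\{i,j\}\subseteq S}\min(w_i,w_j)$. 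This pairwise-minimum identity is the engine of the whole proof: it makes $E(S\cup\{i\})-E(S)=\sum_{j\in S}\min(w_i,w_j)$ immediate, and it identifies $E(N)=\sum_{i=1}^n(i-1)w_i$ as precisely the threshold separating (i) from (ii).

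\emph{Part (ii).} For any $y\in Core(C)$, the constraint $y(N\setminus\{i\})\le C(N\setminus\{i\})$ together with $y(N)=C(N)$ forces $y_i\ge C(N)-C(N\setminus\{i\})=w_i+\sum_{j\neq i}\min(w_i,w_j)$. Summing over $i\in N$ gives $y(N)\ge\sum_{i\in N}w_i+2E(N)$, each unordered pair being counted twice. But budget balance requires $y(N)=C(N)=b+\sum_{i\in N}w_i+E(N)$, and these are compatible only if $b\ge E(N)$. Since $b<E(N)=\sum_i(i-1)w_i$ throughout the interval, no such $y$ exists, so $Core(C)=\emptyset$.

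\emph{Part (i).} Here the plan is to show $\hat C$ is concave with $Core(C)=Core(\hat C)$. First, $\sum_{i=1}^{n-|T|-1}i\,w^{-T}_{i+1}=\sum_{k=2}^{|N\setminus T|}(k-1)w_k^{-T}=E(N\setminus T)$, so $\hat C(T)=C(T)-E(N\setminus T)$. Using the marginal formula for $E$, a short computation gives, for non-empty $S$ and $i\notin S$,
\[
\hat C(S\cup\{i\})-\hat C(S)=w_i+\sum_{j\in N\setminus\{i\}}\min(w_i,w_j),
\]
independent of $S$ (the two min-sums over $S$ and over $(N\setminus S)\setminus\{i\}$ reassemble into the sum over all of $N\setminus\{i\}$). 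Thus $\hat C$ is modular on non-empty coalitions, and the only inequality left for concavity is the one comparing against the empty set, $\hat C(\{i\})\ge\hat C(S\cup\{i\})-\hat C(S)$, which rearranges to exactly $b\ge E(N)$ — the hypothesis. Hence $\hat C$ is concave and $Core(\hat C)\neq\emptyset$ by Shapley. To match the cores: since $E\ge0$ and $\hat C(N)=C(N)$, every $y\in Core(\hat C)$ satisfies $y(S)\le\hat C(S)\le C(S)$, so $Core(\hat C)\subseteq Core(C)$. For the reverse, $y(S)\le\hat C(S)$ is equivalent via $y(N)=C(N)$ to $y(T)\ge C(N)-\hat C(N\setminus T)$ with $T=N\setminus S$, and a direct expansion shows $C(N)-\hat C(N\setminus T)=\sum_{i\in T}\bigl(w_i+\sum_{j\neq i}\min(w_i,w_j)\bigr)$ — exactly the sum over $T$ of the individual lower bounds from part (ii). Summing those bounds therefore gives $y(T)\ge C(N)-\hat C(N\setminus T)$ for all $T$, so $Core(C)\subseteq Core(\hat C)$ and the two coincide.

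The main obstacle is bookkeeping rather than strategy: establishing the pairwise-minimum representation of $E$, and then checking that the aggregate of the individual core bounds reproduces $\hat C$ \emph{exactly} (the $2E(T)$ double-count must reconcile with $E(N)-E(N\setminus T)+E(T)$). Once the identity $E(S)=\sum_{\{i,j\}\subseteq S}\min(w_i,w_j)$ is in hand, every remaining step is a telescoping of min-sums.
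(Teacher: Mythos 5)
Your proposal is correct and follows essentially the same route as the paper: both reduce to the single-machine regime via $r^w(2)$ and Lemmas \ref{alpha}--\ref{subset}, both establish part (ii) by showing the balancedness condition over the coalitions $N\setminus\{i\}$ fails exactly when $b<\sum_i(i-1)w_i$, and both prove part (i) by showing the same auxiliary game $\hat C$ has constant marginal contributions on non-empty coalitions, is concave precisely under the stated bound, and has the same core as $C$ (your summation of the individual lower bounds $y_i\ge C(N)-C(N\setminus\{i\})$ is just the unrolled form of the paper's downward induction). The pairwise-minimum identity $E(S)=\sum_{\{i,j\}\subseteq S}\min(w_i,w_j)$ is a tidy reorganization of the paper's rank-counting arguments rather than a genuinely different method.
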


\begin{proof}

Notice that $w_2+\sum\limits_{i=3}^{n}\left(i-\left\lceil\frac{i}{2}\right\rceil\right)w_{i}=r^w(2)$, and thus by Lemma \ref{alpha}, $m(N)=1$. By Lemma \ref{subset}, $m(S)=1$ for all $S\subseteq N$, and thus all coalitions use a single machine.

i) Notice first that $\hat{C}(T)=C(T)$ if $\left\vert T\right\vert \geq n-1.$ Let $j,k\in N$ and consider $N\backslash \left\{ j,k\right\} .$ By the grand coalition efficiency and individual rationality, if $y$ is a core allocation, then $y(N\setminus\left\{ j,k\right\} )\leq C(N\setminus \left\{ j\right\} )+C(N\setminus\left\{ k\right\} )-C(N).$

Suppose that we have shown that $y(T)\leq \hat{C}(T) $ in any core allocation if $\left\vert T\right\vert >m.$ We need to show that it implies that $y(T)\leq \hat{C}(T)$ in any core allocation if $\left\vert T\right\vert =m.$

Fix $T$ such that $\left\vert T\right\vert =m.$ Following the grand coalition efficiency and individual rationality, if $y$ is a core allocation, then $y(T)\leq \hat{C}(T\cup \left\{
k\right\} )+\hat{C}(N\backslash \left\{ k\right\} )-\hat{C}(N).$

Now, consider the cost function $\hat{C}(T)=C(T)-\sum_{i=1}^{n-\left\vert T\right\vert
-1}iw_{i+1}^{-T},$ for all $\emptyset \neq T\subseteq N.$ By definition, $%
\hat{C}\leq C$ and $\hat{C}(N)=C(N).$ We will show that $\hat{C}$ is concave whenever the lower-bound on the cost of a machine in ($\emph{i}$) is satisfied.

First, notice that for all $k\in N$, $\hat{C}%
(\left\{ k\right\} )=b+w_k-\sum_{i=1}^{n-2}iw_{i+1}^{-\left\{ k\right\}
}=b+w_k-\sum_{i<k}(i-1)w_{i}-\sum_{i>k}(i-2)w_{i}.$

Recall that $w^{T}_{k}$ denotes the waiting cost of the $k^{th}$ agent in $T$, according to the order in $N$ and $w^{-T}_{k}\equiv w^{N\setminus T}_{k}$. Next, fix $\emptyset \neq T\subseteq N\backslash \left\{ k\right\}.$ Then, we have that 
\begin{eqnarray*}
\hat{C}(T\cup \left\{ k\right\} )-\hat{C}(T) &=&\sum_{i=1}^{\left\vert
T\right\vert+1 }iw_{i}^{T\cup \left\{ k\right\} }-\sum_{i=1}^{\left\vert
T\right\vert }iw_{i}^{T}-\sum_{i=1}^{n-\left\vert T\right\vert
-2}iw_{i+1}^{-\left( T\cup \left\{ k\right\} \right)
}+\sum_{i=1}^{n-\left\vert T\right\vert -1}iw_{i+1}^{-T} \\
&=&\sum_{i>k}w_{i}+kw_{k}
\end{eqnarray*}

The equality is based on the following observations: if $i<k$ and $i\in T$, then its rank in $T\cup \left\{ k\right\}$ is the same as in $T,$ and the terms cancel out. The same is true if $i\in N\setminus T.$ If $i>k$ and $i\in T,$ the rank of $i$ is one higher in $T\cup \left\{ k\right\}$ than in $T.$ If $i>k$ and $i\in N\backslash T$, the rank of $i$ is one smaller in $N\setminus \left( T\cup \left\{ k\right\} \right) $ than in $N\setminus T.$ In all cases, the difference is $w_{i}.$ As for $k,$ it appears in the first and fourth terms. The weight on its waiting cost is its rank in $T\cup \left\{k\right\} $ plus its rank in $N\setminus T$ minus 1$.$ For all agents, that
equals $k$.

This result is independent of $T,$ as long as $T\neq \emptyset .$ Making use of this result, we proceed to show $y(T)\le \hat{C}(T).$  We have seen that $\hat{C}(T\cup \left\{ k\right\} )= b+\sum_{i=1}^{\left\vert T\right\vert +1 }iw_{i}^{T\cup \left\{
k\right\} }-\sum_{i=1}^{n-\left\vert T\right\vert -2}iw_{i+1}^{-(T\cup
\left\{ k\right\} )}$ and $\hat{C}(N\backslash \left\{ k\right\} )= b+\sum_{i=1}^{n-1}iw_{i}^{N\backslash \left\{ k\right\} }$and $%
\hat{C}(N)=b+\sum_{i=1}^{n}iw_{i}.$ Thus,
\[
\hat{C}(T\cup \{k\})+\hat{C}(N\setminus\left\{k\right\})-\hat{C}(N)=b+\sum\limits_{i=1}^{|T|+1}iw_{i}^{T\cup\left\{k\right\}}-\sum\limits_{i=1}^{n-|T|-2}iw_{i+1}^{-(T\cup\left\{k\right\})}+\sum\limits_{i=1}^{n-1}iw_{i}^{N\setminus \left\{k\right\}}-\sum\limits_{i=1}^{n}iw_{i}.
\]

Next, let us distinguish several cases.

\noindent\textit{Case 1: $i\in T$ such that $i<k$.} Then, $i$ is the $i^{th}$ agent in $N\backslash
\left\{ k\right\} $ and in $N.$ The agent $i$ has the same rank in $T$ and in $T\cup
\left\{ k\right\} .$

\noindent\textit{Case 2: $i\in T$ such that $i>k.$} Then, $i$ is the $(i-1)^{th}$ agent in $%
N\backslash \left\{ k\right\} $ and $i^{th}$ in $N.$ The agent $i$ has the same rank in $T$ is one
lower than in $T\cup \left\{ k\right\} .$

\noindent\textit{Case 3: $j\notin T$ such that $j<k.$} Then, $j$ is the $j^{th}$ agent in $%
N\backslash \left\{ k\right\} $ and in $N.$ The agent $j$ has the same rank in $%
N\backslash T$ and in $N\backslash (T\cup \left\{ k\right\} ).$

\noindent\textit{Case 4: $j\notin T$ such that $j>k.$} Then, $j$ is the $(j-1)^{th}$ agent in $%
N\backslash \left\{ k\right\} $ and $j^{th}$ in $N.$ The rank of $j$ in $N\setminus T$ is one higher than in $N\backslash (T\cup \left\{ k\right\}
). $

Note that in the second term agent $k$ his waiting cost gets assigned a weight equal to his rank in $T\cup \{k\}$. In the fifth term, his waiting cost gets assigned a weight equal to $k$, his rank in $N$. These cases imply that 
\begin{eqnarray*}
\hat{C}(T\cup \left\{ k\right\} )+\hat{C}(N\backslash \left\{ k\right\} )-%
\hat{C}(N)
&=&b+\sum_{i=1}^{\left\vert T\right\vert
}iw_{i}^{T}-\sum_{i=1}^{n-\left\vert T\right\vert -1}iw_{i+1}^{-T}
\\
&=&\hat{C}\left(T \right).
\end{eqnarray*}

Thus, we have the core constraint $y(T)\leq \hat{C}(T)$. In order to finish the proof of $(\emph{i})$, we will show that $\hat{C}$ is concave. 

To verify concavity, it remains to check that $\hat{C}(T\cup \left\{ k\right\}
)-\hat{C}(T)\leq \hat{C}\left( \left\{ k\right\} \right) -\hat{C}\left(
\emptyset \right) =\hat{C}\left( \left\{ k\right\} \right)$ or equivalently,
\[
\sum_{i>k}w_{i}+kw_{k}\leq b+w_k-\sum_{i<k}(i-1)w_{i}-\sum_{i>k}(i-2)w_{i} 
\]%
which simplifies to
\[
b\geq \sum_{i=1}^{n}(i-1)w_{i} 
\]%
which is satisfied by assumption. Thus, $\hat{C}$ is concave and hence $Core(\hat{C})$ is non-empty which finishes the proof of ($\emph{i}$).

To prove the statement ($\emph{ii}$), recall that we have shown that $C(N\backslash \left\{ l\right\} )=b+\sum_{i=1}^{n-1}iw_{i}^{N\backslash \left\{ l\right\} }$ for $%
l\in N$ and $C(N)= b+\sum_{i=1}^{n}iw_{i}.$ Thus, 
\begin{eqnarray*}
\sum_{l\in N}C\left( N\backslash \left\{ l\right\} \right)  &=&\sum_{l\in
N}\left( b+\sum_{i=1}^{n-1}iw_{i}^{N\backslash \left\{ l\right\} }\right)  \\
&=&nb+\sum_{l\in N}\sum_{i=1}^{n-1}iw_{i}^{N\backslash \left\{ l\right\} } \\
&=&nb+\sum_{i\in N}\left( i\left( n-2\right) +1\right) w_{i}.
\end{eqnarray*}

The last equality is obtained as follows: agent $i$ does not appear in $%
N\backslash \left\{ i\right\} ,$ appears at rank $i$ in $N\backslash \left\{
l\right\} $ if $l>i,$ and appears at rank $i-1$ if $l<i.$ We have $n-i$
coalitions where $l>i$ and $i-1$ coalitions where $l<i.$ Thus, the
coefficient associated to $w_{i}$ is $(n-i)i+(i-1)(i-1)=i(n-2)+1.$

Then, we have that 
\begin{eqnarray*}
\sum_{l\in N}C\left( N\backslash \left\{ l\right\} \right) -(n-1)C(N)
&=&nb+\sum_{i\in N}\left( i\left( n-2\right) +1\right) w_{i}-\left( \left(
n-1\right) b+\sum_{i\in N}\left( n-1\right) iw_{i}\right)  \\
&=&b-\sum_{i\in N}(i-1)w_{i} \\
&<&0,
\end{eqnarray*}
and thus $\sum_{l\in N}C\left( N\backslash \left\{ l\right\} \right)
<(n-1)C(N),$ by our assumption on $b.$ Thus, $Core(C)$ is empty.

\end{proof}

We provide an example that shows that the conditions in Theorems \ref{thm:queue_game_upper_bound} and \ref{thm:queue_game_lower_bound}i) are not necessary for the core to be non-empty.

\begin{example}
\label{ex:core}
Suppose that $N=\{1,2,3,4\}$ and that $w_i=25-5i$ for all $i\in N$. Then, Theorem \ref{thm:queue_game_upper_bound} tells us that the core is non-empty if $b\leq 15$ while Theorem \ref{thm:queue_game_lower_bound}i) guarantees non-emptiness of the core for $b\geq 50$. For $b\in \left[35, 50\right)$, by Theorem \ref{thm:queue_game_lower_bound}ii) the core is empty. We verify what happens when $b\in (15,35)$. 

For $b\in \left[20,35\right)$, coalitions $\left\{1,2,3\right\}$ and $\left\{1,2,3,4\right\}$ use 2 machines, all others use a single one. Using the fact that we must have $y\left(\{i,j\}\right)=b+w_i+2w_j$ for all $i\in \{1,2\}, j\in \{3,4\}$ in any core allocation, we obtain maximal allocations of $(40,35,25,15)$. Using $C(N)-C(N\setminus \{i\})$, we obtain minimal allocations of $(b+15, b+10, b, b-10)$. Immediately, the core is empty for $b\in (25, 35)$. For $b\in \left[20,25\right]$, we can verify that the allocation $(b+15,b+10,25,15)$ is in the core.

For $b\in (15,20)$, coalition $\left\{1,2,4\right\}$ also uses 2 machines. Using the same technique as above to obtain minimal and maximal allocations, our only candidate for a core allocation is $(b+15,b+10,25,15)$. But then, coalition $\{3,4\}$ pays 40, while its stand-alone cost is $b+20$, and thus there are no core allocations. 

For $b\in\left(10,15 \right]$, all coalitions of 3 or more agents use 2 machines, as well as coalition $\left\{1,2\right\}$. In addition to $\left(b+20, b+15, 20, 10\right)$, the allocation $\left(\frac{b}{2}+25, \frac{b}{2}+20, \frac{b}{2}+15, \frac{b}{2}+5\right)$ is also in the core. 

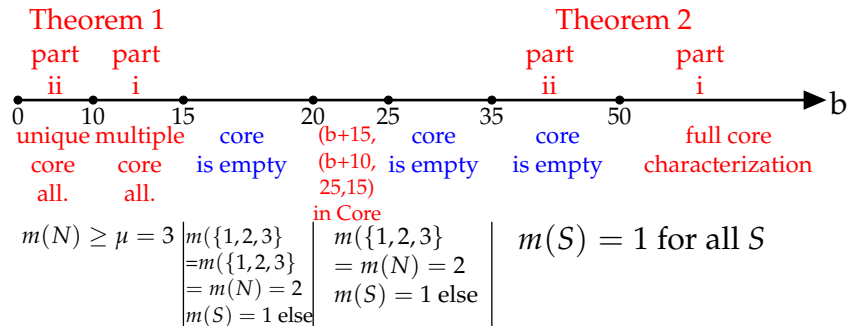
\begin{figure}[!h]
\hspace{1.5cm}
\label{fig:example1fig}
\begin{tikzpicture}[line cap=round,line join=round,>=triangle 45,x=0.5cm,y=0.5cm]
\clip(3.8,-2) rectangle (26.5,7);
\draw [->,line width=1.pt] (4.,4.) -- (25.5,4.);
\draw (25.3,4.535) node[anchor=north west] {b};
\draw (4,6.75) node[anchor=north west] {{\color{red}{\small{Theorem 1}}}};
\draw (18,6.75) node[anchor=north west] {{\color{red}{\small{Theorem 2}}}};
\draw (5.85,5.9) node[anchor=north west] {{\footnotesize{\begin{tabular}{c} {\color{red}{part}} \\ {\color{red}{i}} \end{tabular}}}};
\draw (16.85,5.9) node[anchor=north west] {{\footnotesize{\begin{tabular}{c} {\color{red}{part}} \\ {\color{red}{ii}} \end{tabular}}}};
\draw (20.85,5.9) node[anchor=north west] {{\footnotesize{\begin{tabular}{c} {\color{red}{part}} \\ {\color{red}{i}} \end{tabular}}}};
\draw (3.7,5.9) node[anchor=north west] {{\footnotesize{\begin{tabular}{c} {\color{red}{part}} \\ {\color{red}{ii}} \end{tabular}}}};
\draw (3.3,3.7) node[anchor=north west] {{\scriptsize{\begin{tabular}{c} {\color{red}{unique}} \\ {\color{red}{core}} \\ {\color{red}{all.}} \end{tabular}}}};
\draw (5.4,3.7) node[anchor=north west] {{\scriptsize{\begin{tabular}{c} {\color{red}{multiple}} \\ {\color{red}{core}} \\ {\color{red}{all.}} \end{tabular}}}};
\draw (8.1,3.7) node[anchor=north west] {{\scriptsize{\begin{tabular}{c} {\color{blue}{core}} \\ {\color{blue}{is empty}} \end{tabular}}}};
\draw (11.2,3.7) node[anchor=north west] {{\tiny{\begin{tabular}{c} {\color{red}{(b+15,}} \\ {\color{red}{(b+10,}}   \\ {\color{red}{25,15)}} \\ {\color{red}{in Core}} \end{tabular}}}};
\draw (13.2,3.7) node[anchor=north west] {{\scriptsize{\begin{tabular}{c} {\color{blue}{core}} \\ {\color{blue}{is empty}} \end{tabular}}}};
\draw (16.5,3.7) node[anchor=north west] {{\scriptsize{\begin{tabular}{c} {\color{blue}{core}} \\ {\color{blue}{is empty}} \end{tabular}}}};
\draw (20,3.7) node[anchor=north west] {{\scriptsize{\begin{tabular}{c} {\color{red}{full core}} \\ {\color{red}{characterization}} \end{tabular}}}};

\draw (3.8,1) node[anchor=north west] {\scriptsize{$m(N) \ge \mu =3$}};
\draw (7.8,1) node[anchor=north west] {\tiny{\begin{tabular}{l} {$m(\{1,2,3\}$} \\ {=$m(\{1,2,3\}$} \\ {$=m(N)=2$} \\ {$m(S)=1$ else} \end{tabular}}};
\draw (11.75,1) node[anchor=north west] {\scriptsize{\begin{tabular}{l} {$m(\{1,2,3\}$}  \\ {$=m(N)=2$} \\ {$m(S)=1$ else} \end{tabular}}};
\draw (17,1) node[anchor=north west] {$m(S)=1$ for all $S$};

\draw [line width=0.5pt] (8.4,0.75)-- (8.4,-2);
\draw [line width=0.5pt] (11.85,0.75)-- (11.85,-2);
\draw [line width=0.5pt] (16.6,0.75)-- (16.6,-2);
\begin{scriptsize}
\draw [fill=black] (4.,4.) circle (1.5pt);
\draw[color=black] (4,3.575) node {0};
\draw [fill=black] (6.,4.) circle (1.5pt);
\draw[color=black] (6,3.575) node {10};
\draw [fill=black] (8.4,4.) circle (1.5pt);
\draw[color=black] (8.4,3.575) node {15};
\draw [fill=black] (11.85,4.) circle (1.5pt);
\draw[color=black] (11.85,3.575) node {20};
\draw [fill=black] (13.85,4.) circle (1.5pt);
\draw[color=black] (13.85,3.575) node {25};
\draw [fill=black] (16.6,4.) circle (1.5pt);
\draw[color=black] (16.6,3.575) node {35};
\draw [fill=black] (20,4.) circle (1.5pt);
\draw[color=black] (20,3.575) node {50};
\end{scriptsize}
\end{tikzpicture}
\caption{Summary of Example \ref{ex:core}.}
\end{figure}


The example is summarized in Figure \ref{fig:example1fig}, with the results on the core and the description of $m$, the optimal number of machines, depending on machine cost $b$. In order to provide a clear illustration, Figure \ref{fig:example1fig} does not respect the appropriate proportions.
\end{example}

When $b\leq w_{\mu}$, the game we obtain is reminiscent of assignments games: we need to match agents with a machine (those with $b\leq w_i$) to those without (with $b>w_i)$, matching at most one agent from the second group to each agent in the first group. An agent from the second group generates value of $b-w_i$ when he matches with any agent from the second group. Notice that the first group contains at least half of the agents. If it is strictly more, the only core allocation allocates all gains to the short side of the market, the agents with $b>w_i$. If it is exactly half, as in our example above with $b\in\left(10,15 \right]$, the core contains multiple allocations. This is where the similarities with the assignment game ends, as members of the second group do create value when matched together (they share a machine), which constrains the allocation that is optimal for the first side (those with $b\leq w_i$).

\section{Requeueing games with an endogenous number of machines}
\label{sec:requeue}
For the rest of the paper, we consider queueing problems with an existing queue, which study the problem from a different perspective: while queueing problems consider the minimal cost of organizing the queue for a set of players, starting from scratch, in the following, we consider requeueing problems where possible cost savings can be obtained when we reschedule a given queue. In our study of the problem with an endogeneous number of machines, this implies that we start with a given number of machines, and that the reorganization can include adding or removing machines.\footnote{We use queueing problems with an existing queue and requeuing problems interchangeably.}

Then, a \emph{requeueing problem with an endogenous number of machines} can be described by $(N,m_{0},\sigma_{0},w,b)$ where $m_{0}$ is the initial number of machines and $\sigma_{0}$ is the initial (existing) queue. Our first aim is to find an optimal schedule that minimizes the total costs, as in Section \ref{sec:queue}. As for queueing games, we build a coalitional function from the requeueing games, now associating to each coalition $T\subseteq N$ the maximum cost savings $V(T)$ it can generate from the initially existing queue. We will distinguish between two cases based on whether new machines are exclusive for a set of agents (private) or available for all agents (public). 

\subsection{Private requeueing games}

We consider requeueing problems in which if a coalition buys a new machine, it gains exclusive use of that machine and if a coalition sells a machine it recovers the full value of that machine. These two assumptions can be seen as ``exclusive'' use of machines for a coalition and hence they are ``private'' machines for a coalition. 



In order to determine the maximal cost savings of a coalition $T\subseteq N$, we have to define which rearrangements are admissible. Various assumptions have been made on admissible rearrangements of the initial schedule, see \cite{cetal93}, \cite{s06b}, \cite{metal15}, and \cite{aetal21}, among others. Following the literature, we consider two approaches to define admissible rearrangements to study the non-emptiness of the core for requeuing games with an endogenous number of machines.



First, we do not allow agents in a coalition to jump over agents outside the coalition. Then, we say that a scheduling plan $(m,\sigma$) is \emph{admissible} for a coalition $T$ with respect to $(m_0,\sigma_{0})$ if for any agent outside coalition $T$ there are no new agents in her set of predecessors. That is,
for all $i\in N\setminus T$ it holds that $\varphi(i)=\varphi_0(i)$ and
\begin{equation}
\{l\in N^{\varphi(i)}:s(l)<s (i)\}\subseteq\{l\in N^{\varphi_0(i)}:s_{0}(l)<s_0(i)\}.\label{adm_sch_1} 
\end{equation}
Notice that we do not require equality, as a predecessor of $i$ might move to a new machine. For short, we call this assumption the ''no swaps'' assumption, and the set of admissible schedules for coalition $T$ that satisfy (\ref{adm_sch_1}) is denoted by $\Sigma^{ns}_{T}$.

Second, we relax the condition by allowing agents in a coalition to jump over agents outside the coalition. We say that a scheduling plan $(m,\sigma$) is \emph{admissible} for a coalition $T$ with respect to $(m_0,\sigma_{0})$ if the starting time for all agents outside the coalition does not increase. That is, for all $i\in N\setminus T$ it holds that $\varphi(i)=\varphi_0(i)$ and
\begin{equation}
s(i)\leq s_0(i). \label{adm_sch_2}
\end{equation}
Once again, we do not have equality, as predecessors are allowed to move to a new machine. By opposition, this is the ''swaps'' assumption, and the set of admissible schedules for coalition $T$ that satisfy (\ref{adm_sch_2}) is denoted by $\Sigma^{s}_{T}$.

In our setting, we also must consider the possibility for a coalition to sell a machine. We suppose that a coalition $T$ can sell a machine only if all users of that machine are members of $T$. We then suppose that the agents that were on the removed machine move at the end of the queue on the remaining machines, a condition that is already covered by both (\ref{adm_sch_1}) and (\ref{adm_sch_2}).

For a set of admissible schedules, we can associate the corresponding cooperative TU-game called a private requeueing game with an endogenous number of machines. 
A \emph{private requeuing problem with an endogenous number of machines} is a 5--tuple $(N,m_{0},\sigma_{0},w,b)$.\footnote{Since population and costs are fixed, with an abuse of notation, we denote a private requeuing problem also by the initial number of machines and the initial queue, $(m_0,\sigma_{0})$.} The corresponding \emph{private requeueing game with an endogenous number of machines} $(N,V)$ is defined by $$V(T)=c_{\sigma_{0}}(T)-c_{\sigma}(T)-(m-m_0)b,$$ where $(m, \sigma)$ is an optimal admissible scheduling plan for coalition $T$. 
Furthermore, admissible schedules with and without swaps lead to different games. We denote private requeueing games with swaps by $V_{s}$ and private requeueing games without swaps by $V_{ns}$.

While concavity of a cost game is a sufficient condition for its core to be non-empty, for value games the corresponding concept is that of convexity. The condition has been widely studied to prove balancedness of sequencing games associated with different problems (see for instance \citealp{cetal94}; \citealp{hetal05}; \citealp{metal18}). A game $(N,v)$ is said to be \emph{convex} if for all $i\in N$ and all $S\subseteq T\subseteq N\setminus \{i\}$, it holds $v(T\cup\{i\})-v(T)\ge v(S\cup\{i\})-v(S)$. 

Unfortunately, Example \ref{ex:hurt} shows that the associated game need not be balanced, regardless if admissible schedules allow or not to jump over players outside the coalition.

\begin{example}
\label{ex:hurt}
Consider $(N,m_0,\sigma_0,w,b)$ with $N=\{1,2,3,4,5\}$. The waiting costs per unit for agents are given by the weight vector $w=(w_{i})_{i\in N}=(20,15,13,13,5)$, and the cost of a machine is $b=18$.

First, we suppose that $(m_0,\sigma_0)$ is such that we order agents in the queue on one machine according to their weights, in decreasing order:

\begin{center}
\begin{tabular}{c|c|c|c|c|c|}
\cline{2-6}
$m_{1}$ & 1 & 2 & 3 & 4 & 5 \\ \cline{2-6}
\end{tabular}.%
\end{center}
Notice that agent 1 is a dummy player since he is served first, moving to another machine is strictly worse for her. Thus, we can focus on the game $(N,V_s)$ for the remaining agents. One can calculate that $V_{s}(\{2,3,4\})=36$, $V_{s}(\{2,3,5\})=25$, $V_{s}(\{3,4,5\})=44$, and $V_{s}(\{2,3,4,5\})=46$. Next, let us consider the coalition $T=\{2,4,5\}$.

First, suppose that we allow players in the coalition to jump over players outside the coalition when we define admissible rearrangements. Take the coalition $T=\{2,4,5\}$. Then, an optimal scheduling plan for coalition $T$, $(m_T, \sigma_T)$, is
\begin{center}
\begin{tabular}{c|c|c|c|}
\cline{2-4}
$m_{1}$ & 1 & \textbf{4} & 3  \\ \cline{2-4}
$m_{2}$ & \textbf{2} & \textbf{5} & \\ \cline{2-4}
\end{tabular},
\end{center}
and then the total waiting cost savings are 15+26+15=56, but the coalition buys a machine at a cost 18, and hence the maximal total cost savings is $38=V_{s}(\{2,4,5\})$. Then, $V_{s}(\{2,3,4\})+V_{s}(\{2,3,5\})+V_{s}(\{3,4,5\}+V_{s}(\{2,4,5\})=143>138=3V_{s}(\{2,3,4,5\})$, and hence the core is empty.

Second, suppose that we do not allow players in the coalition to jump over players outside the coalition when we define admissible rearrangements. Take the coalition $T=\{2,4,5\}$. Then, an optimal scheduling plan for coalition $T$, $(m'_T, \sigma'_T)$, is
\begin{center}
\begin{tabular}{c|c|c|c|}
\cline{2-4}
$m_{1}$ & 1 & \textbf{2} & 3  \\ \cline{2-4}
$m_{2}$ & \textbf{4} & \textbf{5} & \\ \cline{2-4}
\end{tabular},
\end{center}
and then the total waiting cost savings are 39+15=54, but the coalition buys a machine at a cost 18, and hence the maximal total cost savings is $36=V_{ns}(\{2,4,5\})$. For all other coalitions $S$, we have $V_{ns}(S)=V_{s}(S)$. Then, $V_{ns}(\{2,3,4\})+V_{ns}(\{2,3,5\})+V_{ns}(\{3,4,5\})+V_{ns}(\{2,4,5\})=141>138=3V_{ns}(\{2,3,4,5\})$, and hence the core is empty.
\end{example} 

Although we have seen that the associated private requeueing game can have an empty core, there are sufficient conditions to guarantee the non-vacuity of the core. First, we provide a lower bound on the cost of a machine for the non-emptiness of the core.

\begin{theorem}
\label{thm:lower_bound_sequencing}
Let $(N,m_0,\sigma_0,w,b)$ be a private requeuing problem and let $(N,V_{ns})$ be the associated private requeuing game without swaps, and $(N,V_{s})$ be the associated private requeueing game with swaps. If $b\le w_{n}$, then $Core(V_{ns})\neq\emptyset$ and $Core(V_{s})\neq\emptyset$.
\end{theorem}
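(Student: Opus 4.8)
The plan is to show that in this cheap-machine regime both value functions coincide and are additive, so that the core collapses to a single allocation. The driving observation is that $b\le w_n\le w_i$ for every $i\in N$: if an agent starts at a strictly positive time, buying one dedicated machine and moving her to time $0$ saves waiting cost $w_i s_0(i)\ge w_i\ge b$, which always (weakly) covers the machine. First I would use this to argue that, for every coalition $T$ and under either admissibility notion, an optimal admissible plan brings every member of $T$ to completion time $1$, and that selling a machine is never profitable (recovering $b$ only forces the displaced members into longer waits or into a repurchased machine). This already forces $V_{s}(T)=V_{ns}(T)$.

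Next I would introduce, for each $i$, the indicator $\delta_i=1$ if $s_0(i)=0$ and $\delta_i=0$ otherwise, and propose the allocation $x_i=w_i s_0(i)-(1-\delta_i)b$; note $x_i\ge 0$, since $s_0(i)\ge 1$ gives $w_i s_0(i)\ge w_i\ge b$. Writing $f_T=\sum_{i\in T}\delta_i$ for the number of members of $T$ sitting at a time-$0$ slot of $\sigma_0$, the key combinatorial step is that relocating all members of $T$ to time $0$ requires exactly $\lvert T\rvert-f_T$ newly bought machines: each member needs her own time-$0$ slot, and the only slots a coalition can inherit are the $f_T$ machines already headed by one of its members, because a non-member starting at time $0$ can never be displaced (its start time cannot increase under (\ref{adm_sch_2}), nor can it acquire predecessors under (\ref{adm_sch_1})). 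Combined with the completion-time lower bound, this gives $V(T)=\sum_{i\in T}\bigl(w_i s_0(i)-(1-\delta_i)b\bigr)=x(T)$ for every $T$. In particular $x(N)=V(N)$ and $x(S)\ge V(S)$ (with equality) for all $S$, so $x\in Core(V_{ns})\cap Core(V_{s})$.

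Because the argument never needs a member to jump over a non-member---members reach time $0$ on fresh or inherited slots---and because time-$0$ non-members are immovable under both (\ref{adm_sch_1}) and (\ref{adm_sch_2}), the two games coincide and are additive (inessential), whence both cores are non-empty (indeed each equals the single point $x$). I expect the main obstacle to be the counting in the middle step: bounding below the number of machines a coalition must buy while letting it reshuffle its own members freely among old and new machines and sell all-member machines. I would discharge this with the slot inequality $(\text{new machines})\ge z-f_T$, where $z$ is the number of members ending at time $0$, together with the fact that every member left at a positive time contributes at least $w_i\ge b$ to the waiting bill, exactly offsetting each machine the coalition declines to buy; checking that selling never improves the optimum then closes the estimate.
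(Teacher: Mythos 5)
Your proposal is correct and follows essentially the same route as the paper's proof: since $b\le w_n$ every agent not already at time $0$ optimally buys her own machine, the game becomes additive with $V_{s}(T)=V_{ns}(T)=\sum_{i\in T}\bigl(s_0(i)w_i-(1-\delta_i)b\bigr)$, and the resulting allocation $x_i=s_0(i)w_i-(1-\delta_i)b$ is exactly the core point the paper exhibits. Your explicit slot-counting argument and the check that selling machines is never profitable are slightly more careful than the paper's exposition, but they do not change the approach.
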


\begin{proof}
Since the cost a machine is at most equal to the smallest unit waiting cost, $b\le w_{n}$, for any coalition $S\subseteq N$, it is optimal to have $|S|$ machines. Let $s_0(i)$ be the starting time of the job $i$ under the schedule $\sigma_0$. The first agents at each machine in the initial schedule, that is $i\in N$ such that $s_0(i)=0$, need not change their position. For all $S\subseteq N$, let $S_0$ be the set of such agents. For all other agents $i\in N\setminus N_0$ such that $s_0(i)\ge 1$, buying a new machine is the option that maximizes the cost savings at any given coalition since $b\le w_{n}$. Thus there exists a unique core allocation where $y_{i}=s_0(i)w_{i} -b$ for all $i\in N\setminus N_0$ and $y_{i}=0$ for $i\in N_0$. Then, $\sum\limits_{i\in N} y_{i}= \sum\limits_{i\in N_0} 0 + \sum\limits_{i\in N\setminus N_0}(s_0(i)w_{i} - b)=\sum\limits_{i\in N\setminus N_0}s_0(i)w_{i} -(n-m_0)b= V_{s}(N)=V_{ns}(N)$, and efficiency holds. For any coalition $S\subset N$, $y(S)=\sum\limits_{i\in S\setminus S_{0} }(s_0(i)w_{i} - b)$. Since using a machine for each agent in the coalition is the optimal schedule in both cases $y(S)=V_{s}(S)=V_{ns}(S)$. Thus, $y$ also satisfies coalitional rationality and hence it is a core allocation.
\end{proof}

Following a lower-bound on the cost of a machine to guarantee the non-emptiness of the core, we provide an upper-bound on the cost of a machine for the non-emptiness of the core when the initial number of machines is 1, $m_{0}=1$. Intuitively, it consists in setting the machine cost so high that no coalition wants to buy a second machine. The problem then becomes one or reorganizing the queue on the existing machine, making the problem equivalent to one with a single machine and no possibility to add more. 


\begin{theorem}
\label{thm:upper_bound_sequencing}
Let $(N,m_0,\sigma_0,w,b)$ be a private requeuing problem. If $m_{0}=1$ and $b\ge\max_{k=1}^{\mu}(n-k)w_{i}$, then $Core(V_{s})\neq\emptyset$ and $Core(V_{ns})\neq\emptyset$.\end{theorem}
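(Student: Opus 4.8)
The plan is to exploit the hypothesis that $b$ is large in order to force every coalition to operate with exactly the one machine it starts from; once additional machines are never profitable, both admissibility notions collapse onto the classical one-machine sequencing game, whose core is already known to be non-empty. The argument therefore splits into three stages: (1) under the bound, the optimal admissible plan for each $T\subseteq N$ uses a single machine; (2) with a single machine, $V_s$ and $V_{ns}$ coincide and equal a one-machine sequencing game; (3) that game is convex, hence balanced.

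For stage (1) I would fix a coalition $T$ and compare its best one-machine rearrangement against any admissible plan with $m\ge 2$ machines. Since $V(T)=c_{\sigma_0}(T)-c_\sigma(T)-(m-1)b$, it suffices to show that the extra waiting-cost reduction obtained in passing from the best single-machine schedule to any $m$-machine schedule never offsets the cost $(m-1)b$; crucially, the benchmark here is the \emph{optimally reordered} single machine, not $\sigma_0$ itself. The marginal value of successive machines is non-increasing, which is precisely the content of Lemma \ref{alpha}, where $r^w$ is shown non-increasing: the grand coalition, facing no outsiders, may reorder to the sorted queue and thus confronts exactly that from-scratch trade-off, so it is enough to rule out the \emph{first} additional machine, whose value is $r^w(2)$. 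For a general coalition I would bound its first-machine marginal value using the monotonicity of Lemma \ref{subset}, reducing everything to a single numerical comparison. The whole of stage (1) then amounts to verifying that the stated threshold $b\ge\max_{k=1}^{\mu}(n-k)w_k$ dominates these marginal values; the elementary estimate driving the comparison is that the waiting an agent in position $p$ can recover by being moved forward is at most $(p-1)$ periods times its weight, a quantity one bounds by $(n-k)w_k$ for a suitable $k\le\mu$.

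For stage (2) I would argue that on a single machine the ``no swaps'' and ``swaps'' rules describe the same admissible rearrangements. Constraint (\ref{adm_sch_1}) forbids a member of $T$ from becoming a new predecessor of any outsider, while constraint (\ref{adm_sch_2}) forbids raising any outsider's starting time; but on one machine a member can move ahead of an outsider only by pushing that outsider back, so both rules permit a coalition exactly the internal permutations of its members within the maximal blocks of consecutive members, with outsiders frozen. Hence $V_s(T)=V_{ns}(T)$ for every $T$, and this common value is the gain from re-sorting each connected block of $T$ in decreasing order of waiting cost, i.e.\ the one-machine sequencing game of \cite{cetal89}. Stage (3) is then immediate: that game is convex, convex games are balanced, and so $Core(V_s)=Core(V_{ns})\neq\emptyset$.

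The main obstacle is stage (1), and specifically the bound. Two points make it delicate. First, the value of an added machine must be measured against the best reordering of the single machine rather than against the initial queue, so one cannot simply bound completion-time reductions relative to $\sigma_0$. Second, under ``no swaps'' a coalition trapped behind outsiders cannot improve by reordering at all, so a fresh machine may be worth more to such a blocked coalition than to an unobstructed one; consequently the worst case is not automatically the grand coalition, and one must check that $b\ge\max_{k=1}^{\mu}(n-k)w_k$ controls the marginal machine value \emph{uniformly} over all coalitions before the clean reduction to sequencing games can be invoked.
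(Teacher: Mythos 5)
Your overall strategy --- make $b$ so large that no coalition ever profits from a second machine, then fall back on the known balancedness of the classical one-machine sequencing game --- is the same as the paper's. However, both of your middle stages contain genuine gaps.

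Stage (1) is not actually carried out, and the tools you invoke do not apply. Lemmas \ref{alpha} and \ref{subset} concern the from-scratch queueing problem, where a coalition sorts itself freely; in the requeueing problem the benefit of an extra machine must be measured against the coalition's best \emph{admissible} rearrangement of the existing machine, and a coalition blocked behind outsiders (especially under no swaps) can value a new machine far more than $r^w(2)$ suggests. You acknowledge exactly this in your closing paragraph, but you leave the uniform control over all coalitions as an unverified ``numerical comparison'' --- and that comparison is the entire content of the theorem. The paper closes it by a direct worst-case argument: it takes the initial order to be fully reversed ($s_0(i)=n-i$), over-estimates the waiting-cost savings available to any coalition of size $k\le\mu$ by $(n-1)w_1+\dots+(n-k)w_k$ (every member jumping to the front of some machine), observes that the hypothesis on $b$ already makes a single machine purchase unprofitable against this over-estimate, notes that for $k>\mu$ the marginal members gain less, and argues that any other initial order only shrinks the attainable savings.

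Stage (2) is false as stated. On one machine with unit processing times, take the queue $1,2,3,4$ and the coalition $\{2,4\}$: under the swaps convention, agents $2$ and $4$ may exchange positions across the outsider $3$ without changing $3$'s starting time, which constraint (\ref{adm_sch_1}) forbids (agent $4$ becomes a new predecessor of $3$). Hence $V_s\neq V_{ns}$ in general, and the convexity you invoke is a property of the no-swaps game of \cite{cetal89} only. The conclusion survives because the core of the one-machine requeueing game is non-empty under both conventions --- which is precisely what the paper cites from \cite{cetal89} and \cite{hetal99} --- but your identification of the two games is not a valid step and the balancedness of $V_s$ needs its own citation or argument.
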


\begin{proof}
We will show that for the private sequencing game without swaps where the agents are ordered in an increasing way with respect to their waiting costs at the initial order, making use of the only machine is better than buying a new machine for any coalition. Given that this is the worst case scenario in terms of ordering, and we still do not want to buy more than one machine, the result holds for all other orderings. Formally, we consider an initial schedule $\sigma_0$ such that $s_0(i)=n-i$ for all $i\in N.$

First, consider the last agent in the order, agent 1, which by definition is such that $w_1 \ge (w_i)_{i\in N\setminus \{1\}}$. If she buys a machine, the total cost savings are $(n-1)w_{1}-b$. Since $b\ge\max_{k=1}^{\mu}(n-k)w_{i}\ge (n-1)w_{1}$, the total cost savings $(n-1)w_{1}-b\le 0$. Hence, agent 1 is worse off by buying a new machine. Note that for any other individual coalition $\{i\}$ such that $i\in N\setminus \{1\}$, since the gain by buying a new machine is $(n-i)w_{i}<(n-1)w_{1}$, the total cost savings $(n-i)w_{i}-b<(n-1)w_{1}-b\le 0$. Thus, no individual coalition $\{i\}$ such that $i\in N$ has incentives to buy a new machine.


We next show that it is true for a coalition containing $k\leq \mu$ agents. Consider the coalition of the last $k$ agents in the order, agents 1 through $k$. Recall that $w_{1}\ge\ldots\ge w_{k}\ge w_{i}$ for all other $i\in N$ such that $s(i)<n-k$. Then, the gain for coalition $\{1,\ldots, k\}$ is $(n-1)w_{1}+\ldots +(n-k)w_{k}$. Since  $b\ge\max_{k=1}^{\mu}(n-k)w_{i}\ge (n-1)w_{1}+\ldots +(n-k)w_{k}$, the total cost savings for $\{k-1,\ldots, 1\}$ is $(n-1)w_{1}+\ldots +(n-k)w_{k}-b\le 0$, and hence coalition $\{1,\ldots, k\}$ prefers to use one machine. Notice that no other coalition $S$ with $|S|=k$ can achieve higher total cost savings than coalition $\{1,\ldots, k\}$, hence no $k$-agent coalition has incentives to buy a new machine.


Notice that for any coalition that consists of last $k$ agents where $k>\mu$, only agents with a position in the initial order of more than $\mu$ get to use the new machine. Hence, their gains are less than the last $\mu$ agents, and they also prefer to use the only machine provided to them. Moreover, since we compare all possible coalitions with the same size coalition that consists of the last agents in the initial queue, our result under not allowing swaps subsumes if we allow swaps. Then, we deal with a 1-machine problem with an initial queue. Together with the result of \cite{cetal89} and \cite{hetal99} stating that the core is always non-empty for 1-machine problem, we guarantee the non-emptiness of the core whenever $b\ge\max_{k=1}^{\mu}(n-k)w_{i}$ with $m_{0}=1$.
\end{proof}

Given that the previous result is obtained using the worst case scenario of a completely inefficient initial ordering, for a random initial ordering a lower bound guaranteeing a non-empty core could be found. However, a general expression for such a bound is difficult to obtain. 

\subsection{Public requeueing games}

Implicit in the previous subsection was the assumption that if a coalition buys a new machine it would gain exclusive use of that machine. That does not have to be the case. We consider here the opposite assumption, in which new machines are available for all agents. To illustrate the differences between the two assumptions, suppose that we have a long queue of agents waiting to go through security/ticket control at a sporting event. If somehow agents waiting got hold of an additional employee who could, given appropriate compensation, open a new lane to speed up the process, who would have access to that lane? Up to now, we had supposed that this new lane would be a VIP lane, accessible only to agents who helped compensate this additional worker. But another reasonable interpretation is that this new lane would be available to all, making this new machine a public good.

We illustrate by returning to Example \ref{ex:hurt}: if a new line opens up, the initial schedule $\sigma_{0}$ is split up in two: 1 and 2 are served first, 3 and 4 second, and 5 third and the new scheduling plan $(m',\sigma^{\prime})$ is
\begin{center}
\begin{tabular}{c|c|c|c|}
\cline{2-4}
$m_{1}$ & 1 & 3 & 5  \\ \cline{2-4}
$m_{2}$ & 2 & 4 & \\ \cline{2-4}
\end{tabular}.
\end{center}

Then, we can calculate the worth of the coalition $T=\{2,4,5\}$, as the waiting costs saved by its members only, net of the new machine cost. In other words, when we add machines a coalition receives the gains its members make in waiting costs, as the queue moves up, but must fully pay for the new machines. 

To properly express how this requeueing occurs, we build from the initial schedule $\sigma_0=(\varphi_0,s_0)$ a priority order $\pi$, which will allow us to determine, which agent moves up when new machines becomes available. Formally, for any $i,j\in N$
\[
\pi(i)<\pi(j)\Leftrightarrow \bigl\{s_{0}(i)< s_{0}(j) \text{ or } \{s_{0}(i)= s_{0}(j) \text{ and }\varphi_{0}(i)<\varphi_{0}(j)\}\bigr\}.
\]

In words, to rank agents we first look at the period in which they are served, and break ties by giving priority to agents served on machines identified with lower numbers.


Notice that in a public requeueing game a coalition $T$ has much less ability to choose an alternative schedule. Once it has chosen a new number of machines, agents requeue automatically using the ordering $\pi$. Coalition $T$ however can still reorder its members, under constraint. We assume that they can do so at two occasions, before and after adjusting the number of machines, under the same constraints (with or without swaps) as for private games. We define as $\hat{\Sigma}^s$ and $\hat{\Sigma}^{ns}$ the admissible schedules under these constraints.

In a public requeuing game (public game for short), given that the machines are public goods, we now suppose that the revenues from the sale of machines must be split equally among all agents in $N$, as the machines are public. Coalition $T$ thus receives a fraction $\frac{|T|}{n}$ of the proceeds. 

Let $\hat{V}_{s}(T,k)$ and $\hat{V}_{ns}(T,k)$ be the functions giving the value (possibly negative) that we obtain if we force coalition $T$ to use $k$ machines, in the public game with and without swaps, respectively. We then have that

\[
\hat{V}_{s}(T,k)=\left\{ 
\begin{array}{c}
\max_{\sigma \in \hat{\Sigma}_{T}^{s}(k)}\left( \sum_{i\in T}\left(
s_{0}(i)\right) -s(i))w_{i}-(k-m_{0})b\right) \text{ if }k\geq m_{0} \\ 
\max_{\sigma \in \hat{\Sigma}_{T}^{s}(k)}\left( \sum_{i\in T}\left(
s_{0}(i)\right) -s(i))w_{i}-\frac{\left\vert T\right\vert }{n}%
(k-m_{0})b\right) \text{ if }k<m_{0}%
\end{array}%
\right. 
\]

It is possible for $\hat{\Sigma}_{T}^{s}(k)$ to be empty if $k<m_0$, if $T$ does not have exclusive use of $m_0-k$ machines, in which case we simply let $\hat{V}_{s}(T,k)=0$. We define $\hat{V}_{ns}(T,k)$ in the same manner.

In the public setting, a new possibility emerges, in which a coalition $T$ can offer side-payments to agents in $N\setminus T$ to move further down the queue. We take an optimistic approach and suppose that the side-payments need to be just enough to cover the additional waiting costs of these agents. Let $\hat{V}_{sp}(T,k)$ be the function giving the value (possibly negative) that we obtain if we force coalition $T$ to use $k$ machines, in the public game with side payments.

To illustrate the differences between the approaches, consider four agents on two machines, with the queue being 1-3 on the first machine and 2-4 on the second machine. Consider coalition $\{1,2\}$. In the games without side payments, these agents cannot sell a machine because they are not the sole users of any of the two machines. Thus $\hat{\Sigma}^{NS}_{\{1,2\}}=\hat{\Sigma}^{S}_{\{1,2\}}=\emptyset$. With side payments however, the coalition can sell the second machine and use the queue 1-2-3-4 on the remaining machine, offering the proper compensation to agents 3 and 4 to move further down the queue. We obtain that $\hat{V}_{sp}(\{1,2\},1)=\frac{1}{2}b-w_2-w_3-2w_4$. Consider now coalition $\{1,2,3\}$. Without side-payments and without swaps, the coalition can sell the first machine and use the queue 2-4-1-3 on the remaining one, for a value of $\hat{V}_{ns}(\{1,2,3\},1)=\frac{3}{4}b-2w_1-2w_3$. With swaps but no side-payments, the coalition can now use the queue 1-4-2-3 for a value of $\hat{V}_{s}(\{1,2,3\},1)=\frac{3}{4}b-2w_2-2w_3$. Finally, with side-payments, we can use the queue 1-2-3-4 and offer $2w_4$ to agent 4 to get her to move to the end of the queue. We obtain $\hat{V}_{sp}(\{1,2,3\},1)=\frac{3}{4}b-w_2-w_3-2w_4$. 

We obtain the optimal cost savings for a coalition by maximizing over the number of machines: $\hat{V}_{sp}(T)\equiv \max_{k\in\{1,...,n\}}\hat{V}_{sp}(T,k)$, $\hat{V}_s(T)\equiv \max_{k\in\{1,...,n\}}\hat{V}_s(T,k)$ and $\hat{V}_{ns}(T)\equiv \max_{k\in\{1,...,n\}}\hat{V}_{ns}(T,k)$.

In the example, we have that $\hat{V}_{sp}\ge \hat{V}_{s} \ge \hat{V}_{ns}$, a result that is general, and offered without proofs, as it simply depends on the set of possibilities.

\begin{proposition}\label{relationship_public_games}
Let $(N,m_{0},\sigma_0,w,b)$ be a public requeueing problem. Then, $\hat{V}_{sp}\ge \hat{V}_{s} \ge \hat{V}_{ns}$.
\end{proposition}




A \emph{public requeueing problem} is a requeueing problem $(N,m_0,\sigma_0,w,b)$ with an endogenous number of machines where the machines are public goods. If all machines are public goods, we call a requeueing game with an endogenous number of machines a \emph{public requeueing game} denoted by $(N,\hat{V}_{sp})$, $(N,\hat{V}_{s})$ or $(N,\hat{V}_{ns})$, depending if we allow or not side payments and swaps.

It is difficult to offer general formulas for $\hat{V}_{s}(T)$ or $\hat{V}_{ns}(T)$, as the sets $\hat{\Sigma}^{ns}_T$ and $\hat{\Sigma}^{s}_T$ have a structure highly dependent on the initial ordering. However, $(N,\hat{V}_{sp})$ is much easier to express, as it is easy to show that if coalition $S$ sells machines and uses only $k$ machines, it is always optimal to move to the optimal queue, offering side-payments to non-members to achieve the result. Let $\hat{m}$ be the function assigning to each coalition the optimal number of machines to use when in the public game with side-payments.\footnote{There could be many, in which case we pick the lowest one.} We offer results on the structure of $\hat{m}$ when we start with an efficient initial ordering.

\begin{lemma}
\label{lemma_m}
Let $(N,m_{0},\sigma_0,w,b)$ be a public requeueing problem such that the ordering $\pi$ induced by $\sigma_0$ is the optimal queue $(1,2,...,n)$. Then, we have:

i) for all $S,T\subseteq N,$ $(\hat{m}(S)-m_{0})(\hat{m}(T)-m_{0})\geq 0.$

ii) if $S\subset T\subseteq N,$ then $\left\vert \hat{m}(S)-m_{0}\right\vert \leq
\left\vert \hat{m}(T)-m_{0}\right\vert $.
\end{lemma}

In words, part i) confirms that we cannot have some coalition buying machines while others sell machines. Either all coalitions buy machines (or stay put) or all coalitions sell machines (or stay put). Part ii) says that if $S$ is a subset of $T$, $T$ will make at least as many transactions as $S$: if $S$ buys some machines, $T$ will buy at least as many, and if $S$ sells some machines, $T$ will sell at least as many.

This structure allows us to  guarantee the non-emptiness of the core for public requeueing games with side-payments when the initial queue is optimal. 

\begin{theorem}
\label{thm:public_convex}
Let $(N,m_{0},\sigma_0,w,b)$ be a public requeueing problem such that the ordering $\pi$ induced by $\sigma_0$ is the optimal queue $(1,2,...,n)$, and let $(N,\hat{V}_{sp})$ be the associated public requeueing game with side payments. Then, $Core(\hat{V}_{sp})\neq\emptyset$.
\end{theorem}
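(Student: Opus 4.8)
The plan is to show that both the public requeueing game with swaps $(N,\widehat{V}_{s})$ and the one without swaps $(N,\widehat{V}_{ns})$ are \emph{convex}; since a convex game has a non-empty core (the savings-game analogue of \cite{s71}), this immediately gives $Core(\widehat{V}_{s})\neq\emptyset$ and $Core(\widehat{V}_{ns})\neq\emptyset$. Throughout I exploit that $\pi=\{1,\dots,n\}$, so the default requeue with $k$ public machines places agent $i$ at start time $s_k(i)=\lceil i/k\rceil-1$, while $s_0(i)=\lceil i/m_0\rceil-1$.

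First I would pin down the value functions. For the no-swaps game I would argue that reordering a coalition's members is useless when $\pi$ is already the optimal order: once $k$ is fixed, the members occupy exactly the slots that $\pi$ assigns them, and since the members are weight-sorted and the slots are time-sorted, the identity placement already minimizes $\sum_{i\in T}s(i)w_i$. Hence
\[
\widehat{V}_{ns}(T)=\max_{k}\Big(\textstyle\sum_{i\in T}\big(s_0(i)-s_k(i)\big)w_i-P(T,k)\Big),
\]
where $P(T,k)=(k-m_0)b$ if $k\ge m_0$ and $P(T,k)=\frac{|T|}{n}(k-m_0)b$ if $k<m_0$. The key structural fact, which I would verify next, is that the per-agent saving $a_i(k):=\big(s_0(i)-s_k(i)\big)w_i$ is non-decreasing in $k$, since $s_k(i)$ is non-increasing in $k$.

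The heart of the argument is an exchange step establishing supermodularity. Fix $S$ and $i\neq j$ with $i,j\notin S$, and write $A=S\cup\{i,j\}$, $B=S\cup\{i\}$, $C=S\cup\{j\}$, $D=S$; convexity is equivalent to $\widehat{V}(A)+\widehat{V}(D)\ge \widehat{V}(B)+\widehat{V}(C)$. Let $k_B,k_C$ be optimal machine numbers for $B,C$, and in the lower bounds $\widehat{V}(A)\ge F(A,\cdot)$, $\widehat{V}(D)\ge F(D,\cdot)$ assign the \emph{larger} of the two to $A$ and the smaller to $D$. In the buying regime $k\ge m_0$ the cost term $P(\cdot,k)$ does not depend on the coalition, so nearly everything cancels and the residual is $a_\ell(k_{\max})-a_\ell(k_{\min})\ge 0$ for the appropriate $\ell\in\{i,j\}$, by monotonicity of $a_\ell(\cdot)$. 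This is the clean case.

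The main obstacle is twofold. First, the selling regime $k<m_0$, where $P(T,k)=\frac{|T|}{n}(k-m_0)b$ depends on $|T|$: here I would absorb the marginal revenue into an adjusted per-agent value $\tilde a_i(k)=a_i(k)-\frac{(k-m_0)b}{n}$ so that $F(T,k)=\sum_{i\in T}\tilde a_i(k)-\alpha_k$ is again modular, then rerun the exchange step; the delicate point is verifying that $\tilde a_\ell(k_{\max})\ge\tilde a_\ell(k_{\min})$ for the machine counts that actually arise, which may require Lemmas \ref{alpha} and \ref{subset} to control how optimal machine counts move with the coalition. Second, the swaps game is genuinely \emph{not} separable: a member's attainable start time depends on how many other members compete for the early slots (a singleton can jump to the front, but two members cannot both). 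Here I would first characterize the optimal swaps schedule — members greedily fill the earliest slots while each outside agent $\ell$ is pushed back no further than its original time $s_0(\ell)$ — and then redo the exchange argument at the level of schedules rather than per-agent terms: from optimal schedules for $B$ and $C$ build feasible schedules for $A$ and $D$ by merging them, giving $A$ the more-machine configuration, and checking that the outside-agent constraints survive. Establishing feasibility and the sign of the surplus in this non-separable setting is where the real work lies.
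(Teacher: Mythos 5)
Your overall strategy (reduce to a separable ``per-agent saving minus machine cost'' form and run an exchange argument over the optimal machine counts) is the right instinct, and your buying-regime computation is essentially the paper's. But there are two genuine gaps, and both are located exactly where you flag ``the delicate point'' and ``the real work.'' First, in the selling regime your lower bound $\widehat{V}(D)\ge F(D,k_{\min})$ with $D=S$ and $k_{\min}<m_0$ is not valid: a coalition may only sell a machine if it contains \emph{all} of that machine's users, and while $B=S\cup\{i\}$ or $C=S\cup\{j\}$ may satisfy this at $k_{\min}$, the strictly smaller $S$ need not, so the schedule you assign to $D$ can be infeasible. Your $\tilde a_i(k)$ patch also does not make the two regimes commensurable: for $k\ge m_0$ the machine cost enters as a coalition-independent constant, while for $k<m_0$ the revenue is shared proportionally, so in the mixed case ($k_B<m_0\le k_C$, say) the constant terms no longer cancel and the residual is not a single monotone per-agent expression. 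The paper resolves both problems in one stroke by replacing the game with a majorant $\overline{V}$ in which any coalition may sell any number of machines and pockets the \emph{full} value $b(m_0-k)$: then $\overline{V}(S,k)=\sum_{l\in S}a_l(k)-(k-m_0)b$ uniformly in $k$, the exchange argument goes through for all $k$, $\overline{V}$ is convex, and since $\overline{V}\ge \widehat{V}_s$ with equality at $N$ one gets $Core(\overline{V})\subseteq Core(\widehat{V}_s)$. Note this means the paper never claims (and you should not claim) that the actual games are convex --- only the relaxed game is.

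Second, you leave the swaps game as a sketch, but it does not require a schedule-merging argument. Under the hypothesis that $\pi=\{1,\dots,n\}$ is the optimal order, swaps cannot help any coalition: its members already occupy time-sorted slots in decreasing order of weight, and moving a member to an earlier slot forces either another member with a weakly larger weight to move back (weakly increasing the coalition's cost) or a non-member to move back (inadmissible). Hence $\widehat{V}_s(T,k)$ reduces to the same separable expression as $\widehat{V}_{ns}(T,k)$, and the non-separability you worry about does not arise. (Alternatively, the paper's one-line reduction handles the pair of games: $\widehat{V}_{ns}\le\widehat{V}_s$ with equality at $N$ gives $Core(\widehat{V}_s)\subseteq Core(\widehat{V}_{ns})$, so it suffices to treat the swaps game.) With these two repairs --- pass to the majorant $\overline{V}$ and observe that swaps are useless under optimal $\pi$ --- your exchange step becomes the paper's proof; you would still want to justify, via the analogue of Lemma \ref{subset}, that the optimal machine count is monotone in the coalition, or use your cleaner ``assign the max to $A$, the min to $D$'' device, which works once the machine-cost term is coalition-independent.
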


\begin{proof}
 To ease on the notation in the proof, we use $V$ for $\hat{V}_{s}$. Let $\hat{N}_{1}$ be the set of agents that have their job processed by the end of period 1 in the initial situation with $m_0$ machines, i.e. $\hat{N}_{1}=\{1,...,m_0\}$. 
 Let $[i]=\{1,\ldots, i\}$ be the set of agents in the queue until agent $i$.

We will construct a core allocation $y$ to prove the non-emptiness of the core. Let $y$ be an allocation such that $y_{i}=\frac{V(\hat{N}_{1})}{|N_{1}|}$ if $i\in\hat{N}_{1}$ and $y_{i}=V([i])-V([i-1])$ otherwise. For a given coalition $S\subseteq N$, we will distinguish between two cases, namely, (i) when the coalition buys more machines and (ii) the coalition sells machines.

First, let us consider the case where the coalition $S$ buys machines. Notice that agents in $\hat{N}_{1}$ are not interested in buying new machines and bring no additional values when we do so. Thus, it sufficient to check the core constraints for $S\subseteq N\setminus \hat{N}_{1}$.

By Lemma \ref{lemma_m}, $\hat{m}([i-1])\leq \hat{m}([i])$ for all $i\in N$. For $i\in N\setminus \hat{N}_{1}$, we have that $$y_{i}=\left(\left\lceil\frac{i}{m_{0}}\right\rceil - \left\lceil\frac{i}{\hat{m}([i-1])}\right\rceil \right)w_{i}+\max_{k=m[i-1]}^{n}\left(\sum\limits_{l=1}^{i}\left(\left\lceil\frac{l}{\hat{m}([i-1])}\right\rceil - \left\lceil\frac{l}{k}\right\rceil \right)w_{l}-b(k-\hat{m}([i-1])\right).$$

Suppose that coalition $S$ uses $k$ machines, with $m_{0}\le k\le n$. Then, $V(S)=\sum\limits_{i\in S}\left(\left\lceil \frac{i}{m_{0}} \right\rceil-\left\lceil \frac{i}{k} \right\rceil\right)w_{i}-b(k-m_{0})$. 

Let $S^k=\{i\in S | \hat{m}([i-1])\ge k\}$, the set of agents $i\in S$ such that coalition $\{1,...,i-1\}$ uses at least $k$ machines. Then, for every such agent $i$, $y_{i}\ge \left(\left\lceil \frac{i}{m_{0}}\right\rceil-\left\lceil\frac{i}{\hat{m}([i-1])}\right\rceil\right)w_{i}\ge \left(\left\lceil \frac{i}{m_{0}}\right\rceil-\left\lceil\frac{i}{k}\right\rceil\right)w_{i} $, and we get, by summing up, that $$\sum_{i\in S^k} y_{i}\ge \sum_{i\in S^k} \left(\left\lceil \frac{i}{m_{0}}\right\rceil-\left\lceil\frac{i}{k}\right\rceil\right)w_{i}. $$

If $S=S^k$, then we are done. Suppose the contrary. Then, there exists an agent in $i\in S$ for which $\hat{m}([i-1])< k$. 

Take the agent $i\in S\setminus S^k$ with the largest index\footnote{By Lemma \ref{lemma_m}, if $i\in S^k$, then $i>j$ for all $j\in S\setminus S^k$. In words, the agents in $S^k$ are further down in the queue and have larger waiting costs than those in $S\setminus S^k$.} and let $k'=\hat{m}([i-1])<k$. Then,
\begin{align*}
    y_{i}&\ge \left(\left\lceil\frac{i}{m_{0}}\right\rceil - \left\lceil\frac{i}{k'}\right\rceil \right) w_{i}+   \sum\limits_{l=1}^{i}\left(\left\lceil\frac{l}{k'}\right\rceil - \left\lceil\frac{l}{k}\right\rceil \right) w_{l}-b(k-k')\\
    &\ge \left(\left\lceil\frac{i}{m_{0}}\right\rceil - \left\lceil\frac{i}{k'}\right\rceil \right) w_{i}+   \sum_{\substack{l\in S\\l\le i}}\left(\left\lceil\frac{l}{k'}\right\rceil - \left\lceil\frac{l}{k}\right\rceil \right) w_{l}-b(k-k').
\end{align*}
For any other $j\in S$ such that $\hat{m}([j-1])=k'$, we have that $y_{j}\ge \left(\left\lceil\frac{j}{m_{0}}\right\rceil-\left\lceil\frac{j}{k'}\right\rceil\right)w_{j}$. Together with the inequality for the agent with the largest index, we obtain that $$\sum_{i\in S^{k'}}y_{i}\ge \sum_{i\in S^{k'}}\left(\left\lceil\frac{i}{m_{0}}\right\rceil-\left\lceil\frac{i}{k}\right\rceil\right)w_{i}+\sum_{i\in S\setminus S^{k'}}\left(\left\lceil\frac{i}{k'}\right\rceil-\left\lceil\frac{i}{k}\right\rceil\right)w_{i}-b(k-k').$$
If $S=S^{k'}$, we are done. If not, repeat the procedure. Notice that, at each step, if we move from $k''$ to $k'''$, we end up with $$\sum_{i\in S^{k'''}}y_{i}\ge \sum_{i\in S^{k'''}}\left(\left\lceil\frac{i}{m_{0}}\right\rceil-\left\lceil\frac{i}{k}\right\rceil\right)w_{i}+\sum_{i\in S\setminus S^{k'''}}\left(\left\lceil\frac{i}{k'''}\right\rceil-\left\lceil\frac{i}{k}\right\rceil\right)w_{i}-b(k-k''').$$ In the end, we reach a point where $S^{k'''}=S$ for some $k'''\ge m_0$. In this case, the inequality simplifies to 
\begin{align*}
y_{S}&\ge \sum\limits_{i\in S} \left(\left\lceil\frac{i}{m_{0}}\right\rceil - \left\lceil\frac{i}{k}\right\rceil \right)w_{i}-b(k-k''')\\
&\ge V(S).
\end{align*}

We have shown that $y(S)\ge V(S)$ when the coalition $S$ uses at least $m_0$ machines.  Next, we will show that when the coalition $S$ sells machines, we can construct a core allocation $y$ making use of the symmetric function to the buying case. By Lemma $\ref{lemma_m}$, if $S$ sells machines, then no coalition buys machines. In addition $\hat{m}([i])\le \hat{m}([i-1])$ for all $i\in N$. 

For the case of selling machines, we define the allocation $y$ in the following way: for all agents whose operation is processed by the first period, that is, for all $i\in \hat{N}_{1}$, $y_{i}=\frac{V(\hat{N}_1)}{|\hat{N}_1|}=\frac{b}{n}\left( m_{0}-\hat{m}(%
\hat{N}_{1})\right) -\frac{1}{\left\vert \hat{N}_{1}\right\vert }\sum_{j\in
N}\left( \left\lceil \frac{j}{m_{0}}\right\rceil -\left\lceil \frac{j}{\hat{m}(%
\hat{N}_{1})}\right\rceil \right) w_{j}$, and $y_{i}=%
V(\left[i\right])-V(\left[i-1\right])=\frac{b}{n}\left( m_{0}-\hat{m}(\left[ i-1\right] )\right) +\max_{k=1}^{\hat{m}([i-1])} \left( \frac{ib}{n}(\hat{m}([i-1])-k)-\sum\limits_{j\in N}\left( \left\lceil \frac{j}{m\left( \left[ i-1\right]
\right) }\right\rceil -\left\lceil \frac{j}{k}\right\rceil \right) w_{j}\right) $ otherwise.

Suppose that coalition $S$ uses $k$ machines, with $1\le k \le m_0$.  The value obtained by coalition $S$ if it sells $k$ machine is $V(S)=\frac{|S|}{n}b(m_0-k)-\sum\limits_{i=1}^{n}\left(\left\lceil\frac{i}{m_{0}-k}\right\rceil-\left\lceil\frac{i}{m_{0}}\right\rceil\right)w_{i}$.

First, suppose that $\hat{m}(\hat{N}_1)\leq k$. Then, by Lemma \ref{lemma_m} 
there does not exists $i\in S\setminus \hat{N}_{1}$ such that $\hat{m}(\left[ i-1\right] )>k.$ Then, for all agents in $S$ whose jobs are not processed by the first period, that is $i\in S\backslash \hat{N}_{1},$ we have $y_{i}\geq \frac{b}{n}\left( m_{0}-k\right) $ while for the agents whose jobs are processed after the first period we have $\sum\limits_{i\in S\cap \hat{N}_{1}}y_{i}\geq
\left\vert S\cap \hat{N}_{1}\right\vert \frac{b}{n}\left( m_{0}-k\right) -%
\frac{\left\vert S\cap \hat{N}_{1}\right\vert }{\left\vert \hat{N}%
_{1}\right\vert }\sum\limits_{j\in N}\left( \left\lceil \frac{j}{m_{0}}\right\rceil
-\left\lceil \frac{j}{k}\right\rceil \right) w_{j}$.

Summing up the payoffs of all agents $i\in S\setminus \hat{N}_{1}$ and all agents $i\in S\cap \hat{N}_{1}$, we obtain
\begin{eqnarray*}
y_{S} &\geq &\left\vert S\right\vert \frac{b}{n}\left( m_{0}-k\right) -%
\frac{\left\vert S\cap \hat{N}_{1}\right\vert }{\left\vert \hat{N}%
_{1}\right\vert }\sum_{j\in N}\left( \left\lceil \frac{j}{m_{0}}\right\rceil
-\left\lceil \frac{j}{k}\right\rceil \right) w_{j} \\
&\geq &\left\vert S\right\vert \frac{b}{n}\left( m_{0}-k\right)
-\sum_{j\in N}\left( \left\lceil \frac{j}{m_{0}}\right\rceil -\left\lceil 
\frac{j}{k}\right\rceil \right) w_{j} \\
&=&V(S).
\end{eqnarray*}

Next, assume that $\hat{m}(\hat{N}_1)>k$.  We have that $\sum\limits_{i\in S\setminus S^{k+1}}y_{i}\geq \left\vert S\setminus S^{k+1}\right\vert \frac{b}{n}\left( m_{0}-k\right)$.

We distinguish two possible cases:\vspace{0.2cm}\\
\noindent \textbf{(i)} $S^{k+1}\subseteq\hat{N}_{1}.$\\

We have that  $\sum\limits_{i\in S\cap \hat{N}_{1}}y_{i}=\sum\limits_{i\in S^{k+1}} y_{i}\geq \left\vert S\cap \hat{N}%
_{1}\right\vert \frac{b}{n}\left( m_{0}-k
\right) -\frac{\left\vert S\cap \hat{N}_{1}\right\vert }{\left\vert \hat{N}%
_{1}\right\vert }\sum_{j\in N}\left( \left\lceil \frac{j}{m_{0}}\right\rceil
-\left\lceil \frac{j}{k}\right\rceil
\right) w_{j}.$

We thus obtain that 
\begin{eqnarray*}
y_{S} &=& \sum_{i\in S\setminus S^{k+1}} y_{i}+ \sum_{i\in S^{k+1}} y_{i} \\
&\geq &\left\vert S\right\vert \frac{b}{n}\left( m_{0}-k\right)
-\sum_{j\in N}\left( \left\lceil \frac{j}{m_{0}}\right\rceil -\left\lceil 
\frac{j}{k}\right\rceil \right) w_{j} \\
&=&V(S).
\end{eqnarray*}
\vspace{0.2cm}\\
\noindent \textbf{(ii)} $S^{k+1}\nsubseteq\hat{N}_{1}.$\\

Take now the member of $S^{k+1}$ with the largest index,
call him agent $i$ and let $k'=\hat{m}([i-1])>k$. We have that 
\begin{equation*}
y_{i}\geq \frac{b}{n}\left( m_{0}-k' )\right) +i\frac{b}{n}%
(k'-k)-\sum_{j\in N}\left( \left\lceil \frac{j}{%
k'}\right\rceil -\left\lceil \frac{j}{%
k }\right\rceil \right) w_{j}.
\end{equation*}

Notice that there are at most $i$ agents in $S^{k+1}$ before $i,$ and thus we have
covered, for all of them, the fraction of value $\frac{b}{n}(k'-k).$

We now have that 
\begin{align}
y_{S} & \geq  \sum_{i\in S\setminus S^{k'+1}}\left\vert
S\setminus S^{k'+1}\right\vert \frac{b}{n}\left(
m_{0}-k\right) -\sum_{j\in N}\left( \left\lceil \frac{j}{k'}\right\rceil -\left\lceil \frac{j}{k}%
\right\rceil \right) w_{j} \nonumber \\
&+ \sum_{i\in S^{k'+1}}\left(y_{i}+\frac{%
b}{n}(k'-k)\right). \label{c} \tag{c} \nonumber
\end{align}

Notice that if $S^{k'+1}=\emptyset,$ we are done, as $y_{S}\geq
V(S).$ Otherwise, repeat the process with $S^{k'+1}$. If $S^{k'+1}\subseteq\hat{N}_{1},$ repeat part (i)  to conclude that  $\sum_{i\in S^{k'+1}} y_{i}\geq \left\vert S\cap \hat{N}%
_{1}\right\vert \frac{b}{n}\left( m_{0}-k'
\right) -\frac{\left\vert S\cap \hat{N}_{1}\right\vert }{\left\vert \hat{N}%
_{1}\right\vert }\sum_{j\in N}\left( \left\lceil \frac{j}{m_{0}}\right\rceil
-\left\lceil \frac{j}{k'}\right\rceil
\right) w_{j}$, and combined with (\ref{c}), that $y(S)\ge V(S)$. If not, repeat part (ii). Notice that, at each step, if we move from k'' to k''', we end up with
\begin{align}
y_{S} & \geq  \sum_{i\in S\setminus S^{k'''+1}}\left\vert
S\setminus S^{k'''+1}\right\vert \frac{b}{n}\left(
m_{0}-k\right) -\sum_{j\in N}\left( \left\lceil \frac{j}{k'''}\right\rceil -\left\lceil \frac{j}{k}%
\right\rceil \right) w_{j} \nonumber \\
&+ \sum_{i\in S^{k'''+1}}\left(y_{i}+\frac{%
b}{n}(k'''-k)\right). \nonumber
\end{align}

The procedure wraps up in a finite number of steps, and thus, $y_{S}\geq V(S).$

\end{proof} 


Given Proposition \ref{relationship_public_games}, the following corollary is immediate.

\begin{corollary}
Let $(N,m_{0},\sigma_0,w,b)$ be a public requeueing problem such that the ordering $\pi$ induced by $\sigma_0$ is the optimal queue $(1,2,...,n)$, and let $(N,\hat{V}_{s})$ and $(N,\hat{V}_{ns})$ be respectively the associated public requeueing game with and without swaps. Then, $Core(\hat{V}_{s})\subseteq Core(\hat{V}_{ns})\neq\emptyset$.
\end{corollary}

An important consequence of Theorem \ref{thm:public_convex} is that whenever each agents own a machine at the initial schedule, then the core is always non-empty.

\begin{corollary}
\label{cor:public_each_own_m}
Given a public requeueing problem $(N,m_0,\sigma_0,w,b)$ such that $m_0=|N|$, the associated public requeueing game without swaps $(N,\hat{V}_{ns})$, with swaps $(N,\hat{V}_{s})$, and with side-payments $(N,\hat{V}_{sp})$ have a non-empty core.
\end{corollary}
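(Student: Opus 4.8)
The plan is to reduce the statement to Theorem \ref{thm:public_convex} by showing that the theorem's hypothesis on $\pi$ may be assumed without loss of generality once $m_0=|N|$. First I would record the structural consequence of $m_0=n$: in a semi-active schedule with $n$ machines and $n$ agents, every machine serves exactly one agent, so $s_0(i)=0$ for all $i\in N$ and every agent is already processed in the first period. In particular, no coalition can lower any of its members' waiting costs by installing additional machines (nobody can be served before time $0$), so the only way a coalition can generate value is by selling machines of which it is the exclusive user.

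Next I would argue that both games $\widehat{V}_{s}$ and $\widehat{V}_{ns}$ are independent of the particular priority order $\pi$ induced by $\sigma_0$, equivalently of the initial machine assignment $\varphi_0$. The key observation is that when a coalition $T$ sells $m_0-k$ of its machines, only the members sitting on those machines are displaced to the back of the queue, while every agent outside $T$ retains its first-period slot; and since every displaced agent belongs to $T$, the coalition can arrange them among themselves by reordering its members (both before and after the sale), so the tie-breaking encoded in $\pi$ plays no role in the outcome. This holds in the no-swaps game as well, because at the back of the queue there are no outsiders for the members to jump over, so the reordering among displaced members is unconstrained. Consequently each of $\widehat{V}_{s}(T)$ and $\widehat{V}_{ns}(T)$ is independent of $\varphi_0$, hence of $\pi$.

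Given this invariance, I would relabel the (identical) machines so that agent $i$ is initially assigned to machine $i$. Because all $s_0(i)=0$, the induced priority order then becomes $\pi(i)=i$ for all $i$, i.e.\ the optimal queue $\{1,2,\ldots,n\}$, while by the invariance just established the games $\widehat{V}_{s}$ and $\widehat{V}_{ns}$ are unchanged. Theorem \ref{thm:public_convex} now applies verbatim to the relabeled problem and yields $Core(\widehat{V}_{s})\neq\emptyset$ and $Core(\widehat{V}_{ns})\neq\emptyset$.

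I expect the invariance step to be the main obstacle: it requires a careful reading of the requeueing rule when machines are sold, in particular that displaced agents move to the \emph{back} rather than triggering a full re-sort of the whole queue by $\pi$ (a full re-sort would let a high-priority seller keep its front slot and bump outsiders, which can destroy balancedness), that outsiders' slots are protected, and that the coalition's freedom to reorder its own members in both the swaps and no-swaps variants exactly neutralizes the dependence on the tie-breaking order. Everything downstream of this is an immediate appeal to the already-proved theorem, so the entire content of the corollary sits in verifying that $m_0=n$ forces the initial schedule to be efficient and that $\pi$ is immaterial.
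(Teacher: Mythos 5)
Your proposal is correct and matches the paper's intent exactly: the paper states this corollary without a separate proof, treating it as an immediate consequence of Theorem \ref{thm:public_convex}, and your argument simply supplies the reduction the paper leaves implicit (with $m_0=|N|$ every agent starts at time $0$, so after relabelling the identical machines the induced priority order $\pi$ is the optimal queue and the theorem applies). Your extra care on the invariance step --- checking that the value of a coalition, which here can only come from selling machines and parking the displaced members at the back, does not depend on the machine labels in either the swaps or no-swaps variant --- is a detail the paper omits but is consistent with its stated selling rule.
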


The assumption that agents are ranked in an optimal way in the original schedule is crucial in our proof, allowing us to obtain the same structure for the optimal number of machines as when queueing without an initial allocation. For instance, if agent 1 has a particularly large waiting cost and is initially ranked last, when he is by himself he might prefer to buy many machines to be served earlier, while with other agents it might not be necessary, as switching spots with other members of the coalition might allow him to be served early without buying as many machines. 

The next example shows that if we relax the assumption that the initial queue is optimal, the core of a public requeueing game can be empty.

\begin{example}
\label{ex: public_no_opt_emptycore}
Consider $(N,m_0,\sigma_0,w,b)$ with $N=\{1,2,3,4\}$, $m_0=1$, and the ordering induced by $\sigma_0$ being $\pi=(4,3,2,1)$. The waiting costs per unit for agents are $w=(w_{i})_{i\in N}=(13,7,6,1)$, and the cost of a machine is $b=15$.

Notice first that the initial queue $(4,3,2,1)$ is not optimal:

\begin{center}
\begin{tabular}{c|c|c|c|c|}
\cline{2-5}
$m_{1}$ & 4 & 3 & 2 & 1 \\ \cline{2-5}
\end{tabular}.%
\end{center}
Suppose that we allow agents to jump over the agents not belonging to the coalition. 
Let $(N,\hat{V}_{s})$ be the associated public requeueing game. Let us illustrate how to calculate the value of a coalition by doing it for $\{1,4\}$. If agent 1 and agent 4 do not buy any machine, then they switch their positions:
\begin{center}
\begin{tabular}{c|c|c|c|c|}
\cline{2-5}
$m_{1}$ & 1 & 3 & 2 & 4 \\ \cline{2-5}
\end{tabular}, %
\end{center}
and hence the total savings for $\{1,4\}$ is $(13-1)\times3=36$. 

If they buy a new machine, since it is a public requeueing game, the queue moves in a way such that agents 3, 4 are served first and agents 1, 2 are served second. Then, agents 1 and 4 change positions:
\begin{center}
\begin{tabular}{c|c|c|c|c|}
\cline{2-5}
$m_{1}$ & 3 & 2 & \,\, & \,\,  \\ \cline{2-5}
$m_{2}$ & 1 & 4 & \,\, & \,\,  \\ \cline{2-5}
\end{tabular},
\end{center}
and hence the total savings for $\{1,4\}$ is $13\times3-15-1=23$.

If they buy two new machines, the queue moves in a way such that agents 2,3,4 are served in the first position of a machine while agent 1 is served in the second position at a machine. Then, agents 1 and 4 switch positions:
\begin{center}
\begin{tabular}{c|c|c|c|c|}
\cline{2-5}
$m_{1}$ & 1 & 4 & \,\, & \,\, \\ \cline{2-5}
$m_{2}$ & 3 &   & &  \\ \cline{2-5}
$m_{3}$ & 2 &  & &  \\ \cline{2-5}
\end{tabular},
\end{center}
and hence the total savings for $\{1,4\}$ is $13\times 3-15\times 2-1=8$. Finally, if they buy three machines, all agents' operations are processed at the first position on each machine:
\begin{center}
\begin{tabular}{c|c|c|c|c|}
\cline{2-5}
$m_{1}$ & 1 & \,\, & \,\, & \,\, \\ \cline{2-5}
$m_{2}$ & 2 & & &  \\ \cline{2-5}
$m_{3}$ & 3 & & &  \\ \cline{2-5}
$m_{4}$ & 4 & & &  \\ \cline{2-5}
\end{tabular},
\end{center}
and hence the total savings is $13\times 3-15\times 3 = -6$. Then, the maximum of possible total cost savings for $\{1,4\}$ is achieved when they do not buy any new machine, $\hat{V}_{s}(\{1,4\})=\max\{36,23,8,-6\}=36$.

Consider now coalitions $\{1,4\}$, $\{2,4\}$, $\{3,4\}$, $\{1,2,3\}$, and $\{1,2,3,4\}$. Following our illustration, one can check that $\hat{V}_{s}(\{1,4\})=36$, $\hat{V}_{s}(\{2,4\})=12$, $\hat{V}_{s}(\{3,4\})=5$, $\hat{V}_{s}(\{1,2,3\})=31$, and $\hat{V}_{s}(N)=37$. Now, suppose that there exists a core allocation $(y_1,y_2,y_3,y_4)\in Core(\hat{V}_{s})$. Then, it would satisfy the core constraints $y_{1}+y_{4}\ge 36$, $y_{2}+y_{4}\ge 12$, $y_{3}+y_{4}\ge 5$, $y_{1}+y_{2}+y_{3}\ge 31$. Nevertheless, such a payoff vector $(y_1,y_2,y_3,y_4)$ is not in the core since the core constraints $(y_{1}+y_{4})+(y_{2}+y_{4})+(y_{3}+y_{4})+2(y_{1}+y_{2}+y_{3})=3(y_{1}+y_{2}+y_{3}+y_{4})\ge 115$ is not compatible with $3\hat{V}_{s}(N)=111$, and hence $Core(\hat{V}_{s})=\emptyset$.
\end{example}

\section{Concluding remarks}
\label{sec:conc}

This paper studies queueing problems from a game theoretical point of view. The novelty of this paper is that the number of machines is endogenous. For a given problem, agents are allowed to (de)activate as many machines they want, at a cost. We have distinguished two types of queueing problems: without and with an initial queue. For the first case, we have provided both a lower and an upper bound on the cost of machine to guarantee the non-emptiness of the core. Moreover, in some instances we have provided a full characterization of the core by means of concavity. For the second case, although we have shown that the core may be empty, we have guaranteed balancedness when all machines are accessible to all agents and the initial ordering correctly ranks agents in decreasing order of their waiting costs.  

The proof of that last result is constructive: the allocation used to show the result is, interestingly, a mix between an average-value and a marginal value allocation: agents served in the first period equally share the value they create together, while other agents are allocated their marginal contribution, precisely when they join in the optimal ordering, i.e. from largest to smallest waiting costs. In particular, if we start with a single machine, the allocation becomes the marginal contribution allocation. \footnote{\cite{vvh03} introduce a class of balanced games in a more general setting of sequencing problems with one machine, and show that the marginal contribution allocation is indeed stable.} It is important to note, however, that the resulting game with side-payments is not necessarily convex, and we cannot use any marginal contribution allocation, or even any of them if we start with more than one machine.

Compared to the earlier literature, our main innovations are (i) the existence of an endogenous number of machines at a given queueing problem, (ii) the cost associated with a machine to (de)activate it, (iii) the distinction between private and public queueing problems with an initial queue. 

An interesting direction for future research is to characterize axiomatically an allocation rule that always selects a stable allocation for balanced requeueing games. Furthermore, although we have a counterexample when swaps are allowed for public requeueing games with the non-optimal initial queue, it is still an open question whether it is also the case when swaps are not allowed.  


\appendix
\section{}
\label{sec:appendix}
We consign to this Appendix proofs of lemmata \ref{alpha}, \ref{subset}, \ref{lemma_m}.\vspace{0.2cm}

\noindent \emph{Proof of Lemma \ref{alpha}}. Fix $N$ and $w$. 
The total cost when $k$ machines are used would be cheaper than when $k-1$ machines are used if
\[
bk+\sum_{i\in N}\left( \left\lceil \frac{i}{k}\right\rceil \right) w_{i}\leq
b(k-1)+\sum_{i\in N}\left( \left\lceil \frac{i}{k-1}\right\rceil \right)
w_{i}
\]%
which simplifies to 
\begin{align}
b &\leq \sum_{i\in N}\left( \left\lceil \frac{i}{k-1}\right\rceil
-\left\lceil \frac{i}{k}\right\rceil \right) w_{i} \nonumber \\
&=w_{k}+\sum_{i=k+1}^{n}\left( \left\lceil \frac{i}{k-1}\right\rceil
-\left\lceil \frac{i}{k}\right\rceil \right) w_{i}. \tag{a} \label{def:r}
\end{align}%

The inequality (\ref{def:r}) provides an upper-bound on the cost of a machine such that we prefer to use $k$ machines to $k-1$ machines. Let us denote this number obtained in (\ref{def:r}) by $r^w(k)$. This defines a function $r^w:\{2,...,n\}\to \mathbb{R}_+$.


We next show that this function is non-increasing. We show that $r^w(k)\leq r^w(k-1),$ that is,%
\[
w_{k}+\sum_{i=k+1}^{n}\left( \left\lceil \frac{i}{k-1}\right\rceil
-\left\lceil \frac{i}{k}\right\rceil \right) w_{i}\leq
w_{k-1}+\sum_{i=k}^{n}\left( \left\lceil \frac{i}{k-2}\right\rceil
-\left\lceil \frac{i}{k-1}\right\rceil \right) w_{i}. \label{ineq}\tag{b}
\]

By assumption, $w_{k}\leq w_{k-1}.$ We will show that \[ \sum_{i=k+1}^{n}\left( \left\lceil \frac{i}{k-1}\right\rceil
-\left\lceil \frac{i}{k}\right\rceil \right) w_{i}\leq
\sum_{i=k}^{n}\left( \left\lceil \frac{i}{k-2}\right\rceil
-\left\lceil \frac{i}{k-1}\right\rceil \right) w_{i}, \] which together with $w_{k}\leq w_{k-1}$ show that the inequality holds. To do so, we compare the right-hand side and the left-hand side summands of the same order in the inequality (\ref{ineq}). We see that
\begin{eqnarray*}
\left( \left\lceil \frac{k+1}{k-1}\right\rceil
-\left\lceil \frac{k+1}{k}\right\rceil \right) w_{k+1}&\leq&
\left( \left\lceil \frac{k}{k-2}\right\rceil
-\left\lceil \frac{k}{k-1}\right\rceil \right) w_{k}\\
\left( \left\lceil \frac{k+2}{k-1}\right\rceil
-\left\lceil \frac{k+2}{k}\right\rceil \right) w_{k+2}&\leq&
\left( \left\lceil \frac{k+1}{k-2}\right\rceil
-\left\lceil \frac{k+1}{k-1}\right\rceil \right) w_{k+1}\\
& \vdots & \\
\left( \left\lceil \frac{n-1}{k-1}\right\rceil
-\left\lceil \frac{n-1}{k}\right\rceil \right) w_{n-1}&\leq&
\left( \left\lceil \frac{n-2}{k-2}\right\rceil
-\left\lceil \frac{n-2}{k-1}\right\rceil \right) w_{n-2}\\
\left( \left\lceil \frac{n}{k-1}\right\rceil
-\left\lceil \frac{n}{k}\right\rceil \right) w_{n}&\leq&
\left( \left\lceil \frac{n-1}{k-2}\right\rceil
-\left\lceil \frac{n-1}{k-1}\right\rceil \right) w_{n-1}
\end{eqnarray*}
\begin{eqnarray*}
0&\leq&
\left( \left\lceil \frac{k}{k-2}\right\rceil
-\left\lceil \frac{k}{k-1}\right\rceil \right) w_{n},
\end{eqnarray*}
and we see that removing a machine is costlier in terms of waiting costs if there are less machines in the initial problem. Applying the result recursively, starting with $r^w(n)$, we obtain that $r^w$ is non-increasing.

It remains to show that we can define $m$ using $r^w$. Let $C(N,k)$ be the cost for coalition $N$ if it uses $k$ machines. Suppose that $b\geq r^w(2)$. Then, since $r^w$ is non-increasing $b\geq r^w(k)$ for all $k\in \{2,...,n\}.$ This implies that $C(N,k)\leq C(N,k+1)$ for all $k=1,...,n-1.$ By transitivity, $C(N,1)\leq C(N,k)$ for all $k\in \{2,...,n\}$ and thus $m(N)=1$.

Suppose next that $r^w(k)>b\geq r^w(k+1)$ for some $1<k<n$. By the same argument as above, $b\geq r^w(k+1)$ implies that $C(N,k)\leq C(N,l)$ for all $l\in \{k+1,...,n\}$. Since $r^w$ is non-increasing, $r^w(k)>b$ implies that $r^w(l)>b$ for all $l=2,...,k$. This implies that  $C(N,l)< C(N,l-1)$ for all $l=2,...,k$. By transitivity, $C(N,k)< C(N,l)$ for all $l\in \{1,...,k-1\}$. Combining with the previous result, we obtain $m(N)=k$.

Finally, suppose that $r^w(n)>b$. By the same argument as above, we have that $C(N,n)< C(N,l)$ for all $l\in \{1,...,n-1\}$ and we obtain $m(N)=n$. \hfill \qed
\vspace{0.2cm}\\
\noindent \emph{Proof of Lemma \ref{subset}}. Let $r_{S}^{w}(k)$ be the equivalent of $r^{w}(k)$ for coalition $S.$

i) If $m(T)\geq \left\vert S\right\vert ,$ the result is immediate. Thus, suppose that $m(T)<\left\vert S\right\vert .$

We show that for any $S\subset T\subseteq N$ and $k=2,...,\left\vert
S\right\vert ,$ we have that $r_{S}^{w}(k)\leq r_{T}^{w}(k).$ That is,
\begin{eqnarray*}
r_{S}^{w}(k) &=&w_{k}^{S}+\sum_{l=k+1}^{\left\vert S\right\vert }\left(
\left\lceil \frac{l}{k-1}\right\rceil -\left\lceil \frac{l}{k}\right\rceil
\right) w_{l}^{S} \\
&\leq &w_{k}^{T}+\sum_{l=k+1}^{\left\vert S\right\vert }\left(
\left\lceil \frac{l}{k-1}\right\rceil -\left\lceil \frac{l}{k}\right\rceil
\right) w_{l}^{T} \\
&\leq &w_{k}^{T}+\sum_{l=k+1}^{\left\vert T\right\vert }\left(
\left\lceil \frac{l}{k-1}\right\rceil -\left\lceil \frac{l}{k}\right\rceil
\right) w_{l}^{T} \\
&=&r_{T}^{w}(k),
\end{eqnarray*}%
where the first inequality comes from the fact that $w_{k}^{S}\leq w_{k}^{T}$ for all $k.$

Then, if $b\geq r_{T}^{w}(2),$ $b\geq r_{S}^{w}(2)$ and $m(S)=m(T)=1.$ Otherwise, $m(S)$ is the highest integer such that $b<r_{S}^{w}(m(S)).$ But since $r_{S}^{w}(m(S))\leq r_{T}^{w}(m(S)),$ we have $b<r_{S}^{w}(m(T))$, and thus $m(S)\leq m(T),$ as desired.

ii) The proof is identical to part i), replacing $S$ by $S\cup \{i\}$ and $T$ by $S\cup \{j\}$. \hfill \qed
\vspace{0.2cm}\\
\noindent \emph{Proof of Lemma \ref{lemma_m}}. i) First, it is immediate that if a coalition prefers to buy $k>1$ machines
than use $m_{0}$ machines, it also prefers to buy one machine to using $m_{0}
$ machines. In the same way, if a coalition prefers to sell $k>1$ machines
to using $m_{0}$ machines, it also prefers to sell one machine to using $%
m_{0}$ machines. Thus, we only need to show that there cannot be $%
S,T\subseteq N$ such that $S$ prefers to buy a machine to using $m_{0}$
machines and $T$ prefers to sell a machine to using $m_{0}$ machines.

Suppose first that $S$ prefers to buy a machine to using $m_{0}$ machines.
Thus, $\sum_{i\in S}\left( \left\lceil \frac{i}{m_{0}}\right\rceil
-\left\lceil \frac{i}{m_{0}+1}\right\rceil \right) w_{i}-b>0.$ But, we have
that 
\begin{eqnarray*}
\sum_{i\in S}\left( \left\lceil \frac{i}{m_{0}}\right\rceil -\left\lceil 
\frac{i}{m_{0}+1}\right\rceil \right) w_{i}-b &\leq &\sum_{i\in N}\left(
\left\lceil \frac{i}{m_{0}}\right\rceil -\left\lceil \frac{i}{m_{0}+1}%
\right\rceil \right) w_{i}-b \\
&\leq &\sum_{i\in N}\left( \left\lceil \frac{i}{m_{0}-1}\right\rceil
-\left\lceil \frac{i}{m_{0}}\right\rceil \right) w_{i}-b \\
&\leq &\sum_{i\in N}\left( \left\lceil \frac{i}{m_{0}-1}\right\rceil
-\left\lceil \frac{i}{m_{0}}\right\rceil \right) w_{i}-\frac{\left\vert
T\right\vert }{n}b
\end{eqnarray*}%
and thus $\sum_{i\in N}\left( \left\lceil \frac{i}{m_{0}-1}\right\rceil
-\left\lceil \frac{i}{m_{0}}\right\rceil \right) w_{i}-\frac{\left\vert
T\right\vert }{n}b>0,$ which can be rewritten as $\frac{\left\vert
T\right\vert }{n}b-\sum_{i\in N}\left( \left\lceil \frac{i}{m_{0}-1}%
\right\rceil -\left\lceil \frac{i}{m_{0}}\right\rceil \right) w_{i}<0$ which
indicates that $T$ does not prefer to sell 1 machine to using $m_{0}$
machines.

Suppose next that $S$ prefers to sell a machine to using $m_{0}$ machines.
Thus, $\frac{\left\vert S\right\vert }{n}b-\sum_{i\in N}\left( \left\lceil 
\frac{i}{m_{0}-1}\right\rceil -\left\lceil \frac{i}{m_{0}}\right\rceil
\right) w_{i}>0.$ But, we have that 
\begin{eqnarray*}
\frac{\left\vert S\right\vert }{n}b-\sum_{i\in N}\left( \left\lceil \frac{i}{%
m_{0}-1}\right\rceil -\left\lceil \frac{i}{m_{0}}\right\rceil \right) w_{i}
&\leq &b-\sum_{i\in N}\left( \left\lceil \frac{i}{m_{0}-1}\right\rceil
-\left\lceil \frac{i}{m_{0}}\right\rceil \right) w_{i} \\
&\leq &b-\sum_{i\in N}\left( \left\lceil \frac{i}{m_{0}}\right\rceil
-\left\lceil \frac{i}{m_{0}+1}\right\rceil \right) w_{i} \\
&\leq &b-\sum_{i\in T}\left( \left\lceil \frac{i}{m_{0}}\right\rceil
-\left\lceil \frac{i}{m_{0}+1}\right\rceil \right) w_{i}
\end{eqnarray*}%
and thus $b-\sum_{i\in T}\left( \left\lceil \frac{i}{m_{0}}\right\rceil
-\left\lceil \frac{i}{m_{0}+1}\right\rceil \right) w_{i}>0,$ which can be
rewritten as  $\sum_{i\in T}\left( \left\lceil \frac{i}{m_{0}}\right\rceil
-\left\lceil \frac{i}{m_{0}+1}\right\rceil \right) w_{i}-b<0,$ which
indicates that $T$ does not prefer to buy 1 machine to using $m_{0}$
machines.

ii) Suppose that $S$ buys machines. Then, by part i), so does $T.$ We have
that $\sum_{i\in S}\left( \left\lceil \frac{i}{m_{0}}\right\rceil
-\left\lceil \frac{i}{\hat{m}(S)}\right\rceil \right) w_{i}-b\left(
\hat{m}(S)-m_{0}\right) \geq \sum_{i\in S}\left( \left\lceil \frac{i}{m_{0}}%
\right\rceil -\left\lceil \frac{i}{k}\right\rceil \right) w_{i}-b\left(
k-m_{0}\right) $ for all $k=m_{0},...,\hat{m}(S).$ Add $\sum_{i\in T\backslash
S}\left( \left\lceil \frac{i}{m_{0}}\right\rceil -\left\lceil \frac{i}{\hat{m}(S)}%
\right\rceil \right) w_{i}$ on both sides to obtain 
\[
\sum_{i\in T}\left( \left\lceil \frac{i}{m_{0}}\right\rceil -\left\lceil 
\frac{i}{\hat{m}(S)}\right\rceil \right) w_{i}-b\left( \hat{m}(S)-m_{0}\right) \geq
\sum_{i\in T}\left( \left\lceil \frac{i}{m_{0}}\right\rceil -\left\lceil 
\frac{i}{k}\right\rceil \right) w_{i}-b\left( k-m_{0}\right) 
\]%
for all  $k=m_{0},...,\hat{m}(S),$ and thus $T$ buys at least as many machines as $%
S$.

Suppose next that $S$ sells machines. Then, by part i), so does $T.$ We have
that $\frac{\left\vert S\right\vert }{n}b(m_{0}-\hat{m}(S))-\sum_{i\in N}\left(
\left\lceil \frac{i}{\hat{m}(S)}\right\rceil -\left\lceil \frac{i}{m_{0}}%
\right\rceil \right) w_{i}\geq \frac{\left\vert S\right\vert }{n}%
b(m_{0}-k)-\sum_{i\in N}\left( \left\lceil \frac{i}{k}\right\rceil
-\left\lceil \frac{i}{m_{0}}\right\rceil \right) w_{i}$ for all $%
k=\hat{m}(S),...,m_{0}.$ We then have that 
\[
\frac{\left\vert T\right\vert }{n}b(m_{0}-\hat{m}(S))-\sum_{i\in N}\left(
\left\lceil \frac{i}{\hat{m}(S)}\right\rceil -\left\lceil \frac{i}{m_{0}}%
\right\rceil \right) w_{i}\geq \frac{\left\vert T\right\vert }{n}%
b(m_{0}-k)-\sum_{i\in N}\left( \left\lceil \frac{i}{k}\right\rceil
-\left\lceil \frac{i}{m_{0}}\right\rceil \right) w_{i}
\]%
for all $k=\hat{m}(S),...,m_{0},$ and thus $T$ sells at least as many machines as $S.$ \hfill \qed

\bibliographystyle{te}
\bibliography{bib_AT}

\end{document}